\newtheorem{Claim}[lemma]{Claim}
\newcommand{\HuangLemma}[3]
{\addtocounter{lemma}{1}\par\medskip\noindent\textbf{Lemma #1 from~\cite{StephenHuang1984} (#2)} {\em #3\/} \smallskip\par}
\newcommand{\HuangLemmaTight}[3]
{\addtocounter{lemma}{1}\par\medskip\noindent\textbf{Lemma #1 from~\cite{StephenHuang1984} (#2)} {\em #3\/} \par}
\newcommand{\margincomment}[1]%
{{\marginpar{{\footnotesize\begin{minipage}{0.75in}%
          \begin{flushleft}%
            {#1}%
          \end{flushleft}%
        \end{minipage}%
      }}}\ignorespaces}
\newcommand{\ignore}[1]{}
\colorlet{darkgreen}{green!45!black}
\newcommand{\etal}{et al.\xspace}
\newcommand{\etals}{et al's\xspace}
\newcommand{\braced}[1]{{ \left\{ #1 \right\} }}
\newcommand{\angled}[1]{{ \left\langle #1 \right\rangle }}
\newcommand{\Keys}{{\sf K}}
\newcommand{\query}{v}
\newcommand{\myproblem}[1]{{\mbox{\sc #1}}\xspace}
\newcommand{\gbsplit}{\myproblem{gbsplit}}
\newcommand{\twoWCST}{\myproblem{2wcst}}
\newcommand\eps\varepsilon
\newcommand{\barT}{{\bar{T}}}
\newcommand{\opt}[1]{\operatorname{\sf opt}({#1})}
\newcommand{\optstar}[1]{\operatorname{\sf opt}^\ast({#1})}
\newcommand{\hwTree}{\operatorname{\tau}}
\newcommand{\compnode}[2]{\ensuremath{\angled{ \query \mathbin{#1} #2 }}}
\newcommand{\KEQ}{e}
\newcommand{\KLT}{s}
\newcommand{\hLT}{h_{\mathsmaller 1}}
\newcommand{\hGE}{h_{\mathsmaller 2}}
\newcommand{\prob}[2]{{#1}({#2})}
\newcommand{\weight}[1]{{\prob p {#1}}}
\newcommand{\nofposvalues}[1]{{#1_{\scriptscriptstyle{+}}}}
\newenvironment{mycases}[1]
{ \begin{list}{}{%
      \setlength \labelsep {0.5em}
      \setlength \labelwidth {0pt}
      \setlength \listparindent {\parindent}
      \setlength \leftmargin {#1}
      \setlength \rightmargin {0pt}
      \setlength \itemsep {0.7ex}
      \setlength \parskip {2pt}
      \setlength \parsep {1pt}   
      \setlength \labelwidth {-0.5em}
      \setlength \topsep {0pt}
      \setlength \partopsep {0pt}
    }}
{ \end{list}                  } 
\newenvironment{outercases}
{ \begin{mycases}{0em} }
{ \end{mycases}                  }
\newcommand{\key}[1]{{\mathsf{#1}}}
\newcommand{\keyA}{{\key A}}
\newcommand{\keyB}{{\key B}}
\newcommand{\keyC}{{\key C}}
\newcommand{\keyD}{{\key D}}
\newcommand{\keyE}{{\key E}}
\newcommand{\keyF}{{\key F}}
\newcommand{\keyAzero}{{\key{A0}}}
\newcommand{\keyAone}{{\key{A1}}}
\newcommand{\keyAtwo}{{\key{A2}}}
\newcommand{\keyAthree}{{\key{A3}}}
\newcommand{\keyBzero}{{\key{B0}}}
\newcommand{\keyBfour}{{\key{B4}}}
\newcommand{\keyCzero}{{\key{C0}}}
\newcommand{\keyDzero}{{\key{D0}}}
\newcommand{\keyDone}{{\key{D1}}}
\newcommand{\keyEzero}{{\key{E0}}}
\newcommand{\keyFone}{{\key{F1}}}
\newcommand{\keyFtwo}{{\key{F2}}}
\newcommand{\keyNzero}{{\key{N0}}}
\newcommand{\keyPzero}{{\key{P0}}}
\newcommand{\keyQzero}{{\key{Q0}}}
\newcommand{\keyQone}{{\key{Q1}}}
\newcommand{\keyRzero}{{\key{R0}}}
\newcommand{\keySzero}{{\key{S0}}}
\newcommand{\keySone}{{\key{S1}}}
\newcommand{\keyTzero}{{\key{T0}}}
\newcommand{\keyTtwo}{{\key{T2}}}
\newcommand{\keyUzero}{{\key{U0}}}
\newcommand{\keyUone}{{\key{U1}}}
\newcommand{\keyVzero}{{\key{V0}}}
\newcommand{\keyVthree}{{\key{V3}}}
\newcommand{\keyWzero}{{\key{W0}}}
\newcommand{\keyWone}{{\key{W1}}}
\newcommand{\keyXzero}{{\key{X0}}}
\newcommand{\keyXtwo}{{\key{X2}}}
\newcommand{\keyYzero}{{\key{Y0}}}
\newcommand{\keyYone}{{\key{Y1}}}
\newcommand{\keyZzero}{{\key{Z0}}}
\newcommand{\treesize}{\footnotesize}
\newenvironment{yntree}{%
  \treesize 
  \forest
  for tree={
    every leaf node={leaf},
    every interior node={interior},
  },
  where n={1}{where n'={2}{edge label={node[edgeYes]}}{}}{},
  where n={2}{where n'={1}{edge label={node[edgeNo]}}{}}{},
}{\endforest}
\newenvironment{yntrees}{%
  \treesize 
  \forest
  where level=1{below=2ex}{}, 
  for descendants={
    where level=1{}{
      every leaf node={leaf},
      every interior node={interior},
      where n={1}{where n'={2}{edge label={node[edgeYes]}}{}}{},
      where n={2}{where n'={1}{edge label={node[edgeNo]}}{}}{},
    }
  },
}{\endforest}
\newcommand{\leafcolor}{red!3}
\newcommand{\interiorcolor}{green!4}
\newcommand{\subtreecolor}{black!4}
\newcommand{\edgecolor}{black!66}
\newcommand{\subtreeparams}[2]{
  \forestset{
    subtree1/.style={
      anchor=north, 
      child anchor=north, 
      outer sep=0pt,
      %
      %
      rounded corners=0.7em, 
      shape=semicircle,
      inner sep=1pt,
      text depth=4pt,
      xscale=#1,
      content format={
        \noexpand\scalebox{#2}[1]{\forestoption{content}}
      },
    },
  }
  \forestset{
    subtree/.style={
      subtree1,
      draw={\edgecolor, thin, dotted}, 
      edge={\edgecolor, thin}, 
      fill=\subtreecolor, 
    },
  }
  \forestset{
    dimtree/.style={
      subtree1,
      draw={gray, thin, dotted},
      edge={gray, thin, dotted}, 
      fill=gray!3,
      text=black!50, 
    },
  }
  \forestset{
    greentree/.style={
      subtree1,
      draw={\edgecolor, thin}, 
      edge={\edgecolor, thin}, 
      fill=\interiorcolor, 
    },
  }
}
\newcommand{\defaultsubtreeparams}{
  \subtreeparams{0.91}{1.1}
}
\tikzstyle{edgeCommon} = [circle, font=\tiny, inner sep=0pt, outer sep=1.5pt] 
\tikzstyle{edgeYes} = [edgeCommon, text depth=0ex, text height=1.2ex, auto=right, node contents={y}, pos=0.35]
\tikzstyle{edgeNo} = [edgeCommon, text depth=0ex, text height=1.2ex, auto=left, node contents={n}, pos=0.4, outer sep=1pt] 
\tikzstyle{edgeZero} = [edgeCommon, auto=right, node contents={0}, pos=0.7] 
\tikzstyle{edgeOne} = [edgeCommon, auto=left, node contents={1}, pos=0.7]
\tikzstyle{edgeLess} = [edgeCommon, auto=right, node contents={$<$}]
\tikzstyle{edgeGreater} = [edgeCommon, auto=left, node contents={$>$}, pos=0.5]
\tikzstyle{edgeEqual} = [edgeCommon, auto=right, node contents={$=$}, pos=0.4]
\tikzset{label distance=-1pt}
\newcommand{\lab}[1]{label={-90:{\footnotesize\ensuremath{{#1}}}}}
\newcommand{%
  \input{TREE_.tex}%
}[1]{%
  \input{TREE_#1.tex}%
}
\begin{document}

\titlerunning{
 On Huang and Wong's algorithm for generalized binary split trees
}
\title{
 On Huang and Wong's algorithm for generalized binary split trees
\\[1em]\large \centering Marek Chrobak
	$\cdot$
  Mordecai Golin
  $\cdot$
	J.~Ian Munro
	$\cdot$
	Neal~E.~Young
  \thanks{
    To appear in Acta Informatica (2022).
    See~\cite{chrobak_etal_isaac_2015} for a conference version of the first result in this paper.
    See~\cite{chrobak2015optimal_erratum,chrobak_etal_simple_bcst_algorithm_2019,CHROBAK2021104707}
    for improved versions of other results in~\cite{chrobak_etal_isaac_2015}.
  }
}
\author{}

\institute{Marek Chrobak,
  {University of California at Riverside.\\
    \and Research supported by NSF grants CCF-1217314 and CCF-1536026.}\\
 Mordecai Golin,
	{Hong Kong University of Science and Technology.\\
          \and Research funded by HKUST/RGC grant FSGRF14EG28 and RGC CERG Grant 16208415.}\\
	J.~Ian Munro,
	{University of Waterloo.\\
	\and Research funded by NSERC and the Canada Research Chairs Programme.} \\
	Neal E.~Young,
	{University of California at Riverside.\\
          \and Research supported by NSF grant IIS-1619463.}
}

\date{}

\maketitle


\begin{abstract}
  Huang and Wong~\cite{StephenHuang1984} proposed a polynomial-time dynamic-programming algorithm
  for computing optimal generalized binary split trees.
  We show that their algorithm is incorrect. Thus, it remains open whether such trees
  can be computed in polynomial time.
  Spuler~\cite{Spuler1994Paper,Spuler1994Thesis}
  proposed modifying Huang and Wong's algorithm to obtain an algorithm for a different problem:
  computing optimal two-way-comparison search trees. 
  We show that the dynamic program underlying Spuler's algorithm is not valid,
  in that it does not satisfy the necessary optimal-substructure property
  and its proposed recurrence relation is incorrect. 
  It remains unknown whether the algorithm is guaranteed to compute a correct overall solution.
\end{abstract}


\section{Introduction}%
\label{sec: introduction}\label{sec: intro}


Given an ordered set $\Keys$ of $n$ keys, a \emph{generalized binary split tree} $T$
is a form of binary search tree where each node $N$ has two associated keys in $\Keys$:
an \emph{equality-test} key and a \emph{split} key~\cite{StephenHuang1984}.
For any query $\query\in\Keys$, a \emph{search} for $\query$ in $T$ starts at the root.
If $\query$ equals the root's equality-test key, then the search halts.
Otherwise, the search recurses in the left or right subtree,
depending on whether or not $\query$ is less than the root's split key.
A correct tree $T$ must have $n$ nodes, and the search for each query $\query\in \Keys$
must halt at the node whose equality-test key is $\query$.
(There must be exactly one such node for each $\query\in \Keys$.)
Given also a probability distribution $p$ on $\Keys$,
the \emph{cost} of a tree $T$ is the expected number of nodes visited
when searching in $T$ for a random query $\query$ drawn from $p$.
The goal, given $\Keys$ and $p$, is to compute a tree $T$ of minimum cost
(thus minimizing, over any tree $T$ of this form, the expected number of two-way comparisons made when searching in $T$).
We denote this problem \gbsplit. See Fig.~\ref{fig: intro bsplit} for an example.
{Following Huang and Wong, here we focus on the so-called \emph{successful-queries} variant,
  in which all queries are guaranteed to be in $\Keys$.
  (In the \emph{general} variant, arbitrary queries are allowed.)}


\begin{figure}\centering
   \noindent%
  {
\scriptsize
 
\begin{forest}
  for descendants={gbst, l sep=1ex},
  [, phantom, 
  [{\normalsize~}, nodraw [, phantom [, phantom [, phantom ]]]]
  [{=\,} \keyC\\0.3\\<\keyE,
          [{=\,} \keyD\\0.2\\<\keyB, 
              [{=\,} \keyA\\0.1]
              [{=\,} \keyB\\0.1]
          ]
          [{=\,} \keyF\\0.2\\<\keyE,
          [, phantom]
          [{=\,} \keyE\\0.1]
          ]
 ]
[, phantom [, phantom [, phantom [, phantom [, phantom][, phantom]]]]]
[, phantom [, phantom [, phantom [, phantom [, phantom][, phantom]]]]]
[{\normalsize~}, nodraw [, phantom [, phantom [, phantom [, phantom][, phantom]]]]]
[\keyC\\0.3,
          [\keyD\\0.2, 
              [ \keyA\\0.1]
              [ \keyB\\0.1]
          ]
          [\keyF\\0.2,
          [, phantom]
          [\keyE\\0.1]
          ]
 ]
]
\end{forest}
}


   \caption{The picture on the left shows an example of a generalized binary split tree
     for key interval $\{\keyA,\keyB,\keyC,\keyD,\keyE,\keyF\}$. 
    Each node is labeled with its equality key and its probability, as well as
	the node's split key (except that split keys are omitted at leaves, where they are irrelevant).
	The total cost of this tree is $0.3\cdot 1 + 2\cdot (0.2\cdot 2) + 3\cdot (0.1\cdot 3) = 2$.
	In all figures in the paper we use a more compact representation,
	shown on the right, where split keys are omitted. (Each node's split key can be any key that separates
     the equality keys in the left subtree from those in the right subtree.)
  }\label{fig: intro bsplit}
\end{figure}


Huang and Wong~\cite{StephenHuang1984} proposed a polynomial-time algorithm for \gbsplit.
We show (in Theorem~\ref{thm: huang flaw}, Sect.~\ref{sec: huang flaw})
that their algorithm and claimed proof of correctness are wrong.
The reason is that their dynamic program does not satisfy the claimed optimal-substructure property.
Consequently, as far as we know, it is not known whether \gbsplit has a polynomial-time algorithm.

A closely related problem is to find an optimal \emph{two-way comparison} search tree,
in which each node is associated with just one key and one binary comparison operator---{equality or less-than}. We use \twoWCST to denote this problem.
{ (See Fig.~\ref{fig: spuler example} for an example.)}
Spuler~\cite{Spuler1994Paper,Spuler1994Thesis} proposed several \twoWCST algorithms.
He described two of his proposed \twoWCST algorithms
(for the successful-queries and general variants, respectively)
as ``straightforward'' modifications of Huang and Wong's \gbsplit algorithm, but
he gave no formal proof of correctness,
explaining only that correctness follows from the dynamic-programming formulation,
in particular from the underlying recurrence relation.

We show (Theorem~\ref{thm: spuler flaw}, Sect.~\ref{sec: spuler flaw}) that this recurrence relation is wrong,
and his algorithm computes incorrect solutions to some subproblems in the dynamic program.
Here also the dynamic program does not satisfy the assumed optimal-substructure property.
This counter-example is only for a subproblem, not a full instance,
so the overall correctness of his proposed algorithm remains open.
(Here also we focus on the successful-queries variant only.)


\paragraph{Historical context.}
The study of optimal binary search trees began with \emph{three-way comparison} search trees.
These have only one key  associated with each node,
and comparing the given query to that key has \emph{three} possible
  outcomes---less than, equal to, or greater than.
Knuth's classical dynamic-programming algorithm computes a minimum-cost tree of this kind
(supporting both successful and unsuccessful queries) in time $O(n^2)$~\cite{Knuth1971}.

Following Knuth's suggestion~\cite[\S6.2.2~ex. 33]{Knuth1998},
various authors began exploring trees based on two-way (binary) comparisons.
Sheil~\cite{Sheil1978} introduced \emph{median split trees}---generalized binary split trees
where the split key at each node $N$ must be a median
key among the set $\Keys_N$ of keys whose search visits node $N$,
and the equality-test key must be a most likely key among $\Keys_N$.
He gave an $O(n\log n)$-time algorithm to compute a median split tree (for the successful-queries variant).
Other authors~\cite{Huang1984,Perl1984,Hester1986} then introduced \emph{binary split trees}---generalized binary split trees with the added restriction
that the equality-test key at each node must be a most likely key
among keys reaching the node. These trees can be thought of as
a relaxation of median split trees, without the restriction that the split key has to be a median key.
Their algorithms compute minimum-cost binary split trees in $O(n^5)$ time
for both the successful-queries and general variants.
(See also the note at the end of this paper.) Huang and Wong~\cite{StephenHuang1984}
then introduced \gbsplit (generalized binary split trees) as defined above,
and proposed an $O(n^5)$-time algorithm for the problem, the one we show here to be incorrect.

Subsequently, the algorithm was extended by Chen and Liu to \myproblem{multiway} \gbsplit,
a variant of \gbsplit that requires multiple split keys per node~\cite{chen_optimal_1991}.
Chen and Liu's algorithm and proof of correctness are directly patterned on Huang and Wong's.
Their proof is invalid (and we believe their algorithm to be incorrect)
for the same reason that Huang and Wong's proof and algorithm fail.
(See the remark at the end of Sect.~\ref{sec: huang flaw}.)

As mentioned above, Spuler~\cite{Spuler1994Paper,Spuler1994Thesis} proposed several \twoWCST algorithms
without proof of correctness. Anderson \etal~\cite{Anderson2002}
gave the first proof that \twoWCST is in polynomial time. Their algorithm runs in time $O(n^4)$
and is restricted to the successful-queries variant. Chrobak \etal~\cite{chrobak_etal_isaac_2015,chrobak2015optimal_erratum,chrobak_etal_simple_bcst_algorithm_2019}
gave a somewhat simpler $O(n^4)$-time  algorithm for the  general variant.

Beyond pointing out errors in the literature on binary search trees,
we hope that the constructions underlying our counter-examples will contribute
to a better understanding of the difficulties involved in designing algorithms for
\gbsplit and \twoWCST, leading to better algorithms or even new hardness results.



\section{Huang and Wong's \gbsplit algorithm is incorrect}%
\label{sec: huang flaw}%


\newcommand{\hfigtree}[2]{\ensuremath{T_{\ref{fig: #1}#2}}}

This section gives our first main result:
a proof that Huang and Wong's proposed \gbsplit algorithm~\cite{StephenHuang1984}
has a fundamental flaw.

\begin{theorem}\label{thm: huang flaw}%
  Huang and Wong's \gbsplit algorithm~\cite{StephenHuang1984} is incorrect.
  There is a \gbsplit instance $(\Keys, p)$ for which it returns a non-optimal tree.
\end{theorem}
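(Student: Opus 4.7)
The plan is to exhibit a concrete instance of \gbsplit on which Huang and Wong's algorithm returns a tree whose cost is strictly greater than the true optimum. Since the algorithm is a dynamic program, the natural route is to pinpoint precisely where the assumed optimal-substructure property breaks down. I would begin by reviewing their recurrence and identifying how each subproblem is parameterized — typically by a contiguous range of keys $[k_i,k_j] \subseteq \Keys$ together with at most one distinguished ``excluded'' key — and by stating explicitly which sets of reaching keys their recurrence allows a subtree to serve.

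The underlying flaw to exploit is that the set of keys reaching a node in a generalized binary split tree need not be a contiguous sub-interval of \Keys: each strict ancestor removes one equality key, and whenever that key falls between the bounds carved out by the ancestors' split keys, it is removed from the set of keys reaching the subtree. Hence, after two or more levels of descent, the reaching set can have multiple ``holes.'' Huang and Wong's DP tracks at most one such hole, so any optimal tree whose subtrees have two or more holes lies outside the search space their algorithm explores.

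With this in hand, the construction proceeds by designing the smallest instance $(\Keys, p)$ for which the true optimum forces some subtree whose reaching set has at least two holes. I would aim for a modest key set (likely between six and ten keys) with probabilities chosen so that (i)~any tree consistent with Huang and Wong's restricted decomposition must misassign at least one moderately heavy key to a strictly deeper position, while (ii)~a carefully designed tree in which two upper equality tests both remove keys from a common descendant subtree saves strictly more than the incurred penalty. Having fixed the instance, I would produce the tree $T^{\text{HW}}$ returned by their algorithm — either by a direct trace of the recurrence or by proving a lower bound on the cost of every tree compatible with the DP's subproblem structure — and exhibit an alternative feasible tree $T^{\ast}$ of strictly smaller cost, completing the proof.

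The main obstacle is calibrating the probabilities so that the gap between $T^{\text{HW}}$ and $T^{\ast}$ is genuine and cleanly verifiable rather than vanishingly small. Two tensions must be balanced: the instance should be small enough that the true optimum can be confirmed (ideally by exhaustive case analysis over the finitely many tree shapes on a small key set, using symmetries to trim the enumeration), yet rich enough for the single-hole restriction to actually bite. Once the probabilities are set, the remaining verification — computing the costs of $T^{\text{HW}}$ and $T^{\ast}$ and checking the strict inequality — is straightforward arithmetic.
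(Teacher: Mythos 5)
Your proposal rests on a mischaracterization of Huang and Wong's dynamic program, and the counterexample strategy built on it would not defeat their algorithm. You assume their subproblems are ``a contiguous range of keys together with at most one distinguished excluded key,'' and that the flaw is that reaching sets with two or more holes lie outside the DP's search space. In fact their DP has a subproblem $(I,h)$ for every interval $I$ and every number of holes $h\le |I|$ (this is exactly why it has $O(n^3)$ subproblems), and its recurrence enumerates all splits $h=\hLT+\hGE+1$; subtrees missing two, three, or more ancestor equality keys are squarely inside the search space. An instance engineered merely so that the optimum ``forces some subtree whose reaching set has at least two holes'' would therefore be handled by their recurrence and would not exhibit a failure.

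The actual flaw is different: the DP records only the \emph{number} of holes, not their identity, defining $\optstar{I,h}=\min\{\opt{I,H}: H\subseteq I,\ |H|=h\}$, and it commits to a minimum-cost tree for each $(I,h)$. The optimal-substructure property fails with respect to $\optstar{I,h}$ because a slightly \emph{more} expensive tree for $(I,h)$ that uses a lighter set of keys (equivalently, removes heavier keys as holes) can be strictly better in context, since the total weight of the subtree's keys is charged again at every ancestor. Exploiting this requires weights calibrated so that the cheapest tree for some $(I,h)$ with $h\ge 1$ carries more key weight than a near-optimal alternative, and then an outer construction forcing that subproblem to occur inside every optimal tree of a hole-free full instance --- a further step your plan does not address, and one that is nontrivial (the known counterexample pads a 9-key bad subproblem with two neutral subintervals to reach 31 keys; a 6--10 key instance with your guiding principle will not work). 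As written, the proposal would not produce a valid counterexample without first re-identifying the flaw.
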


\renewcommand{\int}[2]{[#1,#2]}

We summarize their algorithm and analysis, give the intuition behind the failure, then prove the theorem.
The basic intuition is that, for the dynamic program that Huang and Wong define,
the optimal-substructure property fails.
The proof gives a specific counter-example and verifies it.
  The counter-example can also be verified computationally
  by running the Python code for Huang and Wong's algorithm in Appendix~\ref{sec: code huang}.

Fix any \gbsplit instance $(\Keys, p)$.
Assume without loss of generality that the keys are $\Keys=\{1,2,\ldots,n\}$.
Regarding the probability vector $p$, for convenience, throughout the paper we
drop the constraint that the probabilities must sum to 1, and we use 
``probabilities'' and ``weights'' synonymously,
allowing their values to be arbitrary non-negative reals.
(To represent probabilities, these values can be appropriately normalized.)

During a search, the outcome of each less-than comparison narrows the current search interval,
  while the outcome of each (failed) equality test removes one key within the interval
  from consideration. Thus, at each node in any search tree, the set of keys reaching the node
  consists of some interval of keys, minus some so-called \emph{holes}---keys removed from consideration by previous equality tests.
  Next we formally define an exponentially large (!)
  class of subproblems that arise in this way,
  along with a natural recurrence relation for their cost. We then discuss how Huang and Wong attempt
  to reduce the number of subproblems to $O(n^3)$.

Abusing notation, a \emph{query interval} $I=\int i j$ is the set of contiguous keys $\{i,i+1,\ldots,j\}$.
Given any query interval $I$ and any subset $H\subseteq I$ of ``hole'' keys,
consider the subproblem $(I,H)$ formed by the subset of keys $I\setminus H$, with the 
weight distribution obtained from $p$ by restricting to $I\setminus H$.
Let $\opt {I,H}$ denote the minimum cost of any generalized binary split tree for this subproblem.
Let $p(I\setminus H) = \sum_{k\in I\setminus H} p_k$ denote the total weight of its keys. 

If $H = I$ then the subproblem can be handled by an ``empty'' tree, so $\opt {I,H} = 0$.
Otherwise, letting $I=\int i j$, the definition of generalized binary split trees gives the recurrence
\begin{linenomath*}
\begin{equation*}
  \opt {I , H} ~=~ 
  \weight{I\setminus H}+
  \displaystyle\min_{\substack{\KLT\in [i, j+1];\\ \KEQ\in I \setminus H}}
  ~\big(~\opt {\int i {s-1}, \,H_{\KEQ} \cap \int i {s-1} }
  \,+\,   \opt {\int s j,\, H_{\KEQ} \cap \int s j}~\big)
\end{equation*}
\end{linenomath*}

\noindent 
where $H_{\KEQ} = H \cup \{\KEQ\}$. (Here $\KLT \in [i, j+1]$ ranges over the possible split keys;\footnote
{Huang and Wong allow $n+1$ as a split key,
  which is inconsistent with their stated definition of \gbsplit.
  This is a minor technicality---any tree that uses $n+1$ as a split key
  is easily converted into an equally good tree that does not.}
$\KEQ \in I \setminus H$ ranges over the possible equality keys.)

\smallskip

The goal is to compute $\opt {\Keys,\emptyset}$. The recurrence above allows arbitrary equality keys $\KEQ$,
so it gives rise to exponentially many hole sets $H$,
resulting in a dynamic program with exponentially many subproblems.
Huang and Wong propose a dynamic program with $O(n^3)$ subproblems $(I,h)$,
one for each interval $I$ and integer $h\le |I|$.  Specifically, they define
\begin{linenomath*}
\begin{equation*}
  \optstar {I, h} ~=~ \min\{\, \opt {I, H} : H\subseteq I,\, |H|=h\,\},
\end{equation*}
\end{linenomath*}
which is the minimum cost of any tree for interval $I$ minus \emph{any} hole set of size $h$.
Each such tree will have $|I|-h$ nodes. (Their paper uses ``$p[i-1,j,h]$'' to denote $\optstar {\int i j,h}$.) 
We refer to any such subproblem $(I, h)$ as an \emph{HW-subproblem}.

They develop a recurrence for $\optstar{I, h}$ as follows. For any node $N$ in an optimal tree,
define $N$'s \emph{interval} $I_N$ and \emph{hole set} $H_N$ in the natural way
so that interval $I_N$ contains those key values that, if searched for in $T$ with the equality tests 
ignored, would reach $N$,
and $H_N\subseteq I_N$ contains those keys \emph{in interval $I_N$} that are equality keys at ancestors of $N$.
Hence, the set of keys reaching $N$ is $I_N\setminus H_N$, and the subtree rooted at $N$ is a solution 
for the subproblem $(I_N, H_N)$, as well as the HW-subproblem $(I_N, |H_N|)$,
which we refer to as \emph{the HW-subproblem arising at $N$}.  Huang and Wong's Lemma 1 states:


\HuangLemma{1}{ambiguous}
{``Subtrees of an optimal generalized binary split tree
  are optimal generalized binary split trees.''}


This statement is ambiguous in that it doesn't specify \emph{for which subproblem} the subtree is optimal.
Consider any subtree $T'$ of an optimal tree $T^*$. Let $T'$ have root $N$, interval $I_N$ and hole set $H_N$.
The first interpretation of their Lemma~1 is that $T'$ must be an optimal solution for $({I_N, H_N})$.
With this interpretation (following the first recurrence above), the lemma is indeed true.
But another interpretation is that $T'$ must be an optimal solution for the HW-subproblem
$({I_N, |H_N|})$ arising at $N$. This interpretation is not the same---the HW-subproblem specifies only the \emph{number} of holes,
and choosing different holes can give a cheaper tree,
so it can be that $\optstar{I_N, |H_N|} < \opt{I_N, H_N}$. As we shall see below,
it is the second interpretation that  underlies the recurrence relation that Huang and Wong propose,
but, with that interpretation, as our Theorem~\ref{thm: no optimal substructure} shows,
the above lemma is false because the HW-subproblems do not have optimal substructure.

The ambiguity in Lemma 1 appears to be their first misstep.
They follow it with the following (correct) observation:


\HuangLemma{2}{correct}{
  Let $N$ be the root of a subtree $T'$ with interval $I$ in an optimal generalized binary split tree $T^*$.
  The equality-test key $e_N$ of $N$ must be the least frequent key
  among those in $N$'s interval $I_N$ that do not occur (as an equality-test key) in 
  the left and right subtrees of $N$.\footnote 
  {To avoid confusion, note that the lemma does not preclude
    a descendant $D$ of $N$ from having an equality-test key $e_{D}$ 
    that is more likely than $e_{N}$, because $e_{N}$ might not be in $D$'s interval.
    So it does not imply that the equality-test key $e_{N}$ at $N$ 
    is as likely as all equality-test keys in the subtree rooted at $N$.
    For example, see keys $\keyAtwo$ and $\keyDone$ in tree $T_{2a}$ in Fig.~\ref{fig: huang small}.}
}


\begin{linenomath*}
\begin{proof}
  The proof is a simple exchange argument.
  Suppose for contradiction that $e_N$ is more likely than some key $k$ in $I_N$
  and $k$ does not occur as an equality-test key in   the left and right subtrees of $N$.
  Then $k$ is a hole at $N$, so it must be the equality-test key $k=e_{N'}$ of some ancestor $N'$ of $N$.
  A contradiction is obtained by observing that exchanging $e_N$ and $e_{N'}$ gives a correct tree cheaper than $T^*$.
  \hfill\qed
  \vspace*{-1em}

  {}~{}
\end{proof}

Huang and Wong's Lemma~2 above (with the second, incorrect interpretation of their Lemma 1) suggests the following idea.
To find a hopefully optimal tree $\hwTree(I,h)$ for the HW-subproblem $(I, h)$,
consider each possible root split key and each possible split of the $h$ hole slots.  For each,
first find  optimal left and right subtrees for their respective subproblems,
and then take the equality key at the root to be the least-likely key in $I$
that is not an equality test in either subtree. Among trees obtained in this way,
take $\hwTree(I,h)$ to be one of minimum cost. Following this idea, their algorithm
(as detailed on pages 118--120 of their paper)
solves any given HW-subproblem $(I,h)$, where $I=[i,j]$ is non-empty, as follows:
\end{linenomath*}

\begin{center}
  \begin{minipage}{4.5in}
    \hrule
    \medskip
    \renewcommand{\labelenumii}{\theenumii}
    \renewcommand{\theenumii}{\theenumi.\arabic{enumii}.}

    \begin{enumerate}
    \item For each triple $(\KLT, \hLT, \hGE)$ where $\KLT\in [i, j+1]$ (the split key),
      and $\hLT$ and $\hGE$  (the numbers of holes in the left and right subtrees)
      are non-negative integers
      such that
      $\hLT + \hGE = h+1$,
      $\hLT \le |[i, s-1]| = \KLT-i$,
      and $\hGE \le |[s, j]| = j-\KLT+1$,
      construct one possible \emph{candidate tree} $T(\KLT, \hLT, \hGE)$ as follows:

      \smallskip 
      
      \begin{enumerate}[label*=\arabic*.]

      \item Give $T(\KLT, \hLT, \hGE)$ left and right subtrees $\hwTree([i, \KLT-1], \hLT)$ and $\hwTree([\KLT, j], \hGE)$.

        \smallskip 

      \item\label{step: e} 
 	Give the root of $T(\KLT, \hLT, \hGE)$ split key $\KLT$ and equality-test key $\KEQ$, 
    \emph{where $\KEQ$ is a least-likely key in $I$  that is not an equality-test key in either subtree.}

      \end{enumerate}
      \smallskip

    \item
    	Among trees $T(\KLT, \hLT, \hGE)$ so constructed, take $\hwTree(I,h)$ to be one of minimum cost.
    \end{enumerate}
    \hrule 
  \end{minipage}
\end{center}

{}~{}

The algorithm is not hard to implement. Appendix~\ref{sec: code huang} gives Python code for it (30 lines).

Note that, by their Lemma~2, the choice for $e$ in Line~\ref{step: e} would be correct
\emph{if} the second interpretation of their Lemma~1 was correct. We surmise that this line of thinking led 
Huang and Wong to their algorithm.

To justify the algorithm, Huang and Wong proceed as follows.  Fix any
execution of the algorithm
(breaking ties arbitrarily; see the remarks below).
For any HW-subproblem $(I,h)$ that it solves, let $\optstar{I, h}$ denote the minimum cost of any tree for the 
subproblem. Recall that $\hwTree(I, h)$ denotes the algorithm's solution (tree) for the subproblem,
presumably of cost $\optstar{I, h}$.
Huang and Wong first state a correct base case:


\HuangLemmaTight{3}{correct}{%
  \[\optstar{\emptyset, 0} = 0.\]%
}%
\noindent %
But their Lemma~4 then claims that, for any  non-empty interval $I=\int i j$ and any number of holes $h \le |I|$,
the following recurrence relation holds:
%


\HuangLemma{4}{incorrect}{
  \begin{linenomath*}
    \[\displaystyle\optstar{I, h} 
      \,=\, \min_{\KLT,\hLT}~\big(~ p(T(\KLT, \hLT, \hGE))
      \,+\, \optstar{ \int i {\KLT-1}, \hLT } \,+\, \optstar{ \int \KLT j, \hGE} ~\big) \]
  \end{linenomath*}
  where the minimum is over all legal combinations of $\KLT$ and $\hLT$,
  and
  $\hGE = h-\hLT+1$, and
  $p(T(\KLT, \hLT, \hGE))$ is the weight of keys in the tree $T(\KLT, \hLT, \hGE)$
  as defined above.}


\paragraph{Ambiguities in Lemma 4.}
During the execution of the algorithm, in Steps 1.2 and 2, ties may arise in choosing a minimizer.
Different choices can lead to different subtrees 
for any given $T(\KLT, \hLT, \hGE)$, with different hole sets.
Huang and Wong do not explicitly discuss tie-breaking,
and in the absence of such a rule
$p(T(\KLT, \hLT, \hGE))$ is not uniquely determined by the
subproblem $(I, h)$ and the parameters $(\KLT, \hLT, \hGE)$.
But the refutations we give here hold no matter how ties are broken. 

More significantly, our statement of their Lemma 4 corrects what we believe is an error.
Namely, their statement of the lemma
has ``$w(I, h)$'' where we have ``$p(T(\KLT, \hLT, \hGE))$'',
with $w(I, h)$ (on their page 118) defined as the ``total weight of the optimal GBST for''
the HW-subproblem $(I, h)$.
We believe that they had in mind the recurrence as we give it
(using $p(T(\KLT, \hLT, \hGE))$),
mainly because this recurrence
is the one that their algorithm, as defined on pages 118--120 of their paper,  actually uses.

Our Theorem~\ref{thm: no optimal substructure}, next,
refutes their Lemma~4 regardless of this issue---it refutes any recurrence based on the class of HW-subproblems $\{(I, h)\}$,
by showing that the class doesn't have the optimal-substructure property.
In Theorem~\ref{thm: huang flaw} and elsewhere,
by ``Huang and Wong's algorithm'',
we mean the algorithm as defined in pages 118--120 of their paper (independently of their statement of Lemma~4).
Our refutation of that algorithm, after Theorem~\ref{thm: no optimal substructure} below,
gives an instance on which it fails.


  \begin{theorem}\label{thm: no optimal substructure}
    There exists a \gbsplit instance $(\Keys,p)$ with the following property.
    In every optimal tree $T^*$ for $(\Keys,p)$, there is at least one node $N$ such that
    the subtree $T^*_N$  rooted at $N$ in $T^*$
    is not optimal for the HW-subproblem $(I_N,|H_N|)$ arising at $N$.
    (The tree $T^*_N$ has cost strictly larger than $\optstar{I_N, |H_N|}$.)
  \end{theorem}
 

\begin{proof}
Before we describe $(\Keys,p)$, we first describe an HW-subproblem for which using a minimum-cost tree $T'$
can be a bad choice globally. The HW-subproblem is $(I_9, 2)$, with $h=2$ holes
and interval $I_9$ consisting of nine keys
$I_9 =
\{\keyAone$, $\keyAtwo$, $\keyAthree$, $ \allowbreak\keyBzero$, $\keyBfour$,
$\keyCzero$, $\keyDzero$, $\keyDone$, $\keyEzero\}$,
ordered lexicographically, with weights as follows:

\begin{quote}\centering\small

  \begin{tabular}{|r||c|c|c|c|c|c|c|c|c|}\hline
    key    & $\keyAone$ &$\keyAtwo$ & $\keyAthree$ & $\keyBzero$ 
    & $\keyBfour$ & $\keyCzero$ & $\keyDzero$ & $\keyDone$ & $\keyEzero$ \\ \hline
    weight & 20 & 20 & 20 & 10 & 20 & 5 & 10 & 22 & 10 \\ \hline
  \end{tabular}

\end{quote}


\begin{figure}\centering
 \noindent%
  {
\scriptsize
 
\begin{forest}
  for descendants={gbst,},,,
  [, phantom, s=1em
  [{\normalsize ~}, nodraw [, phantom [, phantom [, phantom [, phantom][, phantom]]]]]
   [\keyBfour\\20, bright, s sep=5em, 
    [\keyAthree\\20, s sep=5em
      [\keyAtwo\\20, name=A2, tikz={
        \begin{scope}[on background layer]
          \node at (A2.west)[xshift=-2.5em,yshift=3em] {{\large $\hfigtree{huang small}{a}$}};
          \node
          [draw, dashed, fill=gray, fill opacity=.1, 
              fit=()(!11)(!ll), 
              semicircle, rounded corners=1.5em, 
              inner sep=0em,
              yshift=-1.4ex,
              xscale=0.6,
              yscale=0.98
          ]
          {};
        \end{scope}
      }
          [\keyAone\\20, 
              [\keyBzero\\10, ]
              [\keyCzero\\ 5, bright]
          ]
          [\keyDone\\22, bright
              [\keyDzero\\10,]
              [\keyEzero\\10,]
          ]
      ]
      [, opacity=0, edge=dotted]
  ]
  [, opacity=0, edge=dotted]
   ]
[, phantom [, phantom [, phantom [, phantom [, phantom][, phantom]]]]]
[, phantom [, phantom [, phantom [, phantom [, phantom][, phantom]]]]]
[{\normalsize ~}, nodraw [, phantom [, phantom [, phantom [, phantom][, phantom]]]]]
[\keyDone\\22, bright, s sep=5em, 
    [\keyAthree\\20, s sep=5em
      [\keyAtwo\\20, name=XX, tikz={
        \node at (XX.west)[xshift=-2.5em,yshift=3em] {{\large $\hfigtree{huang small}{b}$}}; 
      \begin{scope}[on background layer]
        \node 
          [draw, dashed, fill=gray, fill opacity=.1, 
              fit=()(!11)(!ll), 
              semicircle, rounded corners=1.5em, 
              inner sep=0em,
              yshift=-1.4ex,
              xscale=0.6,
              yscale=0.98
          ]
        {};
      \end{scope}
      }
        [\keyAone\\20, 
          [\keyBzero\\10, ]
          [\keyCzero\\5, bright, phantom]
        ]
        [\keyBfour\\20, bright,
          [\keyDzero\\10,
            [\keyCzero\\ 5, bright]
            [X\\0, phantom]
          ]
          [\keyEzero\\10,]
        ]
      ]
      [, opacity=0, edge=dotted]
    ]
    [, opacity=0, edge=dotted]
  ]
]
\end{forest}
}


  \caption{Subtrees $\hfigtree{huang small}{a}$ and $\hfigtree{huang small}{b}$ for 9-key interval $I_9$ with $h=2$. 
    { $\hfigtree{huang small}{a}$ is missing the two keys   $\keyAthree, \keyBfour$; 
      $\hfigtree{huang small}{b}$ is missing  $\keyAthree, \keyDone$.} 
    Each node shows its equality key and the frequency of that key;
    split keys are not shown. (For each node, take the split key to be any key that separates
    the keys in the left and right subtrees.)
    The costs of $\hfigtree{huang small}{a}$ and $\hfigtree{huang small}{b}$ are $209$ and $210$, respectively,
    but in $\hfigtree{huang small}{a}$, the total weight of the keys is larger by 2.
  }\label{fig: huang small}
\end{figure}


Figure~\ref{fig: huang small} shows two possible subtrees $\hfigtree{huang small}{a}$ and 
$\hfigtree{huang small}{b}$ for $(I_9, 2)$, each with seven nodes. 
By calculation, subtree $\hfigtree{huang small}{b}$ costs 1 more than subtree $\hfigtree{huang small}{a}$ for $(I_9, 2)$.
(Indeed, key $\keyCzero$ contributes $5$ units more to $\hfigtree{huang small}{b}$ than to $\hfigtree{huang small}{a}$,
while key $\keyBfour$ contributes $4$ units less to $\hfigtree{huang small}{b}$
than key $\keyDone$ contributes to $\hfigtree{huang small}{a}$.)

\begin{figure}\centering
   \noindent%
  {
\scriptsize

\begin{forest}
  for tree={gbst,},,,
  [\keyBfour\\20, bright, name=root,
    [\keyAthree\\20, 
      [\keyAtwo\\20, tikz={
        \begin{scope}[on background layer]
          \node
          [draw, dashed, fill=gray, fill opacity=.05, 
              fit=()(!11)(!ll), 
              semicircle, rounded corners=1.5em, 
              inner sep=0em,
              yshift=-1.4ex,
              xscale=0.6,
              yscale=0.98
          ]
          {};
        \end{scope}
      }
          [\keyAone\\20, 
              [\keyBzero\\10, name=left]
              [\keyCzero\\ 5, bright]
          ]
          [\keyDone\\22, bright
              [\keyDzero\\10,]
              [\keyEzero\\10,]
          ]
      ]
      [\keyFtwo\\20, ,
          [\keyFone\\20
              [\keyNzero\\10,]
              [\keyPzero\\10,]
          ]
          [\keyQone\\20, 
              [\keyQzero\\10,]
              [\keyRzero\\10,]
          ]
      ]
  ]
  [\keyVthree\\20,
      [\keyTtwo\\20,
          [\keySone\\20,
              [\keySzero\\10,]
              [\keyTzero\\10,]
          ]
          [\keyUone\\20, 
              [\keyUzero\\10,]
              [\keyVzero\\10,]
          ]
      ]
      [\keyXtwo\\20, 
          [\keyWone\\20,
              [\keyWzero\\10,]
              [\keyXzero\\10,]
          ]
          [\keyYone\\20,
              [\keyYzero\\10,]
              [\keyZzero\\10,]
          ]
      ]
  ]
]
\node at (current bounding box.north)[xshift=-3.5em,yshift=-0.2em]{{\large $\hfigtree{huang flaw}{a}$}}; 
\end{forest}

\medskip

\begin{forest}
  for tree={gbst,},,
  [\keyDone\\22, bright, name=root
    [\keyAthree\\20, 
      [\keyAtwo\\20, tikz={
      \begin{scope}[on background layer]
        \node 
          [draw, dashed, fill=gray, fill opacity=.05, 
              fit=()(!11)(!ll), 
              semicircle, rounded corners=1.5em, 
              inner sep=0em,
              yshift=-1.4ex,
              xscale=0.6,
              yscale=0.98
          ]
        {};
      \end{scope}
      }
        [\keyAone\\20, 
          [\keyBzero\\10, name=left]
          [\keyCzero\\5, bright, phantom]
        ]
        [\keyBfour\\20, bright,
          [\keyDzero\\10,
            [\keyCzero\\ 5, bright]
            [X\\0, phantom]
          ]
          [\keyEzero\\10,]
        ]
      ]
      [\keyFtwo\\20, ,
        [\keyFone\\20
          [\keyNzero\\10,]
          [\keyPzero\\10,]
        ]
        [\keyQone\\20, 
          [\keyQzero\\10,]
          [\keyRzero\\10,]
        ]
      ]
    ]
    [\keyVthree\\20,
      [\keyTtwo\\20,
        [\keySone\\20,
          [\keySzero\\10,]
          [\keyTzero\\10,]
        ]
        [\keyUone\\20, 
          [\keyUzero\\10,]
          [\keyVzero\\10,]
        ]
      ]
      [\keyXtwo\\20, 
        [\keyWone\\20,
          [\keyWzero\\10,]
          [\keyXzero\\10,]
        ]
        [\keyYone\\20,
          [\keyYzero\\10,]
          [\keyZzero\\10,]
        ]
      ]
    ]
  ]
\node at (current bounding box.north)[xshift=-3.5em,yshift=-0.2em]{{\large$\hfigtree{huang flaw}{b}$}};
\end{forest}
}


  \caption{Trees $\hfigtree{huang flaw}{a}$ and $\hfigtree{huang flaw}{b}$ for an instance of \gbsplit with 31-key interval $I_{31}$.
    Key order is lexicographic:
    $\keyAzero < \keyAone < \keyAtwo < \keyAthree < \keyBzero  < \cdots$. 
    As in Fig.~\ref{fig: huang small}, split keys are not shown.
    Huang and Wong's algorithm gives
    a tree of cost $1763$, such as $\hfigtree{huang flaw}{a}$, 
    but tree $\hfigtree{huang flaw}{b}$ costs $1762$. 
  }\label{fig: huang flaw}
\end{figure}


Although $\hfigtree{huang small}{b}$ costs 1 more than $\hfigtree{huang small}{a}$,
choosing subtree $\hfigtree{huang small}{b}$ instead of $\hfigtree{huang small}{a}$ can decrease the cost of the overall tree!
To see why, suppose that $\hfigtree{huang small}{a}$ occurs as a subtree of some tree $T^*$,
in which $\hfigtree{huang small}{a}$ has parent $\keyAthree$ and grandparent $\keyBfour$ as shown in the figure.
(See also Fig.~\ref{fig: huang flaw}.)
Consider replacing $\hfigtree{huang small}{a}$ and its two hole keys $\keyAthree$ and $\keyBfour$
by $\hfigtree{huang small}{b}$ and its  two hole keys $\keyAthree$ and $\keyDone$. 
This replacement decreases the cost of the entire tree by 1 unit, because
the contribution of $\keyCzero$ increases by $5$, swapping
$\keyBfour$ and  $\keyDone$ decreases the cost by $6$, and the contributions of
other nodes do not change. 
But a different calculation gives better intuition why Huang and Wong's algorithm fails.
The contribution of the subtree $\hfigtree{huang small}{a}$
to the overall cost equals the cost of $\hfigtree{huang small}{a}$
in isolation plus twice the weight of keys in $\hfigtree{huang small}{a}$
(because $\hfigtree{huang small}{a}$ has two ancestors).
The modification increases the cost of the subtree by 1 (so it is no longer optimal for its subproblem)
but decreases the total weight of its keys by 2.
Thus, the subtree’s contribution to the overall cost changes by $+1 - 2\cdot 2 = -3$.
This decrease of $3$ is more than the increase of 2 that comes from changing the key
$\keyBfour$ at the overall root to $\keyDone$, which is 2 units heavier.

Next we use this HW-subproblem to obtain the complete instance $(\Keys,p)$
for Theorem~\ref{thm: no optimal substructure}.
The instance has a 31-key interval $I_{31}$, which extends the previously considered interval $I_9$ 
by appending two ``neutral'' subintervals, with 7 and 15 keys.
Figure~\ref{fig: huang flaw} shows two trees $\hfigtree{huang flaw}{a}$ and $\hfigtree{huang flaw}{b}$ for $(\Keys,p)$. 
As shown there, the new keys are given weights
so that each of the two added subintervals (without any holes) has a self-contained, optimal balanced subtree.
To finish proving Theorem~\ref{thm: no optimal substructure},
  we prove that $(\Keys, p)$ has the necessary properties:      



\begin{lemma}\label{lemma: K p}
  Let $T^*$ be any optimal tree for this \gbsplit instance $(\Keys,p)$.
    At some node $N$ of $T^*$ the HW-subproblem $(I_9, 2)$ arises,
    but the subtree $T^*_N$ rooted at $N$ has cost at least 210
    for $(I_9, 2)$, while $\optstar{I_9, 2} \le 209$.
\end{lemma}


  To bound tree costs, define a \emph{key placement} (for a tree $T$) to be an assignment
  of the equality-test keys in $T$ to distinct nodes in the infinite rooted binary tree $T_\infty$.
  Define the \emph{cost} of the placement to be the average weighted depth of the placed keys,
  weighted according to the key weight-vector $p$.
  Each correct \gbsplit tree $T$ yields a placement of equal cost
  by placing each equality-test key in the same place in $T_\infty$ that it occupies in $T$.
  The converse does not hold, partly because placements can ignore the ordering of keys.

  By an exchange argument, a placement has minimum cost if and only if it puts
  the weight-22 key $\keyDone$ at depth 0,
  the fourteen weight-20 keys at depths 1--3,
  and the sixteen remaining (weight-10 and weight-5) keys at depth 4.
  By calculation, such a placement costs 1757.
  No placement costs less, so no tree costs less.
  Tree $\hfigtree{huang flaw}{b}$ almost achieves a minimum-cost placement---it fails only in that it places the weight-5 key at depth 5,
  so costs 1762, just 5 units more than the minimum placement cost.  

  \begin{Claim}\label{claim: placement}
    $T^*$ has the following structure: 
    \begin{itemize}
    \item[(i)] It places the fifteen keys of weight 20 or more at depths 0--3.
    \item[(ii)] It places the fifteen weight-10 keys at depth 4.
    \end{itemize}
  \end{Claim}

  Next we prove the claim.
  Since $T^*$ is optimal it costs at most 1762 (the cost of $\hfigtree{huang flaw}{b}$),
  so its placement also costs at most 1762.
  Suppose for contradiction that (i) doesn't hold.
  Then $T^*$ places a key $k$ of weight 20 or more at depth at least 4.
  Also, in depths 1--3, it either places at least one key $k'$ of weight 10,
  or places fewer than fifteen keys.
  In either case, by exchanging $k$ and $k'$, or just re-placing $k$ in depth 1--3,
  we can obtain a key placement that costs at least 10 units less than 1762.
  But this is impossible, as the minimum placement cost is 1757.
  So (i) holds.
  Now suppose for contradiction that (ii) doesn't hold.
  Then there is a weight-10 key $k'$ at depth 5 or more,
  and at most fifteen keys at depth 4, so $k'$ can be re-placed in depth 4,
  yielding a key placement that costs 10 less, which is impossible.
  This proves the claim.

  Key placements ignore the ordering of keys.
  The following \emph{order property} captures the restrictions on key placements due to the ordering.

	  \begin{quote}
      \emph{
        Let $T$ be any correct \gbsplit tree. 
        Let {$P$} and {$P'$} be nodes in $T$ with equality-test keys $k$ and $k'$.
        Let {$Q$} be the least-common ancestor of {$P$} and {$P'$}.
        If {$P$} is in {$Q$}'s left subtree, and {$P'$} is 
  	  in {$Q$}'s right subtree, then $k<k'$.}
	  \end{quote}

  The property holds simply because $k$ and $k'$ are separated by $M$'s split key.
  
  Fix any optimal tree $T^*$ for $(\Keys,p)$. 
  Claim~\ref{claim: placement} imposes stringent constraints on the depth of all keys in $T^*$,
  except for the weight-5 key $\keyCzero$.  There are two cases:
  \begin{outercases}
  \item[1]\emph{$T^*$ places $\keyCzero$ at depth 4.}
    With Claim~\ref{claim: placement}, this implies that
    $T^*$ is a complete balanced binary tree of depth 4 (like $\hfigtree{huang flaw}{a}$),
    whose sixteen depth-4 nodes hold the fifteen weight-10 keys and $\keyCzero$.
    By the order property, these depth-4 keys are ordered left to right,
    just as they are in $\hfigtree{huang flaw}{a}$, with the left-most four nodes at depth 4 having keys
    $\keyBzero$, $\keyCzero$, $\keyDzero$, and $\keyEzero$.
    
    The left spine has only five nodes.  By the order property, all five keys less than $\keyCzero$
   cannot be elsewhere than on the spine.  So \emph{$\keyDone$ is not on the left spine.}
    
    Let $M$ be the parent of sibling leaves $\keyDzero$ and $\keyEzero$. Since $\keyDzero < \keyDone < \keyEzero$,
    by the order property, $\keyDone$ must lie on the path from $M$ to the root.
    Since $\keyDone$ is not on the left spine,
    and $M$ is the only node on this path that is not on the left spine, $\keyDone$ must be $M$.
    So $\keyDone$ has depth 3 in $T^*$. Now exchanging $\keyDone$ with the root key gives a placement
    that costs at least 6 less, that is, at most $1762-6 < 1757$,
    which is impossible as the minimum placement cost is 1757. So Case~1 cannot happen.

  \item[2]\emph{$T^*$ places $\keyCzero$ at depth 5.}
    Let $L_0, L_1, \ldots, L_\ell$ be the left spine of $T^*$, starting at the root.
    Take $T'$ to be the subtree of $T^*$ rooted at $L_2$.
    By Claim~\ref{claim: placement}, $T^*$ has fifteen depth-4 nodes, holding the fifteen weight-10 keys.
    By the order property, these depth-4 keys are ordered left to right within their level and
	at most twelve of them are not in $T'$.
    This implies that the weight-10 keys $\keyBzero$, $\keyD0$ and $\keyEzero$ must be in $T'$.
		  
    The next larger weight-10 key, $\keyNzero$, cannot be in $T'$.
    Indeed, if it were, then by the order property, all keys less than or equal to $\keyNzero$
    would be in $T'\cup \{L_1, L_0\}$.
    But there are twelve keys less than or equal to $\keyNzero$ and at most eight keys in $T'$.
	
    We now focus on the cost of $T'$.
	By the previous two paragraphs, $T'$ has exactly three keys at depth $2$, namely $\keyBzero$,  $\keyDzero$, and $\keyEzero$.
    By the order property and the assumption for Case 2,
    $\keyCzero$ must be (the only key) at depth 3 in $T'$ (as the child of either $\keyBzero$ or $\keyDzero$). 
    By Lemma~\ref{lemma: depth}, the three keys at depths 0 and 1 in $T'$ have weight 20 or 22. 
	Therefore,
    by calculation, \emph{the cost of $T'$ is at least 210 (see Fig.~\ref{fig: huang small}).}

    Since $\keyEzero$ is in $T'$, by the order property, all eight keys less than $\keyEzero$ are in $T' \cup\{L_0, L_1\}$.
    That is, $T'\cup\{L_0, L_1\}$ contains at least the 9 keys in $I_{9}$.
    But (as observed above) $T'$ has seven nodes. So $T'\cup\{L_0, L_1\}$ contains exactly the 9 keys in $I_{9}$,
    and the HW-subproblem solved by $T'$ must be $(I_9, 2)$.
    As observed above, $T'$ costs at least 210.
    But tree $\hfigtree{huang small}{a}$  (Fig.~\ref{fig: huang small}) of cost 209 also solves $(I_9, 2)$,
   so $\optstar{I_9, 2} \le 209$.
  \end{outercases}
  This proves Lemma~\ref{lemma: K p} and Theorem~\ref{thm: no optimal substructure}. \hfill\qed
  \vspace{-1.1em}
  
  ~
\end{proof}


  We prove one final utility lemma before we prove Theorem~\ref{thm: huang flaw}.
  Consider any execution of Huang and Wong's algorithm on the input $(\Keys, p)$
  defined in the proof of Theorem~\ref{thm: no optimal substructure}.
  Let $T = \hwTree(I_{31}, 0)$ be the algorithm's solution.  


  \begin{lemma}\label{lemma: I9}
    If $T$ contains a node $N$ whose HW-subproblem is $(I_9, 2)$,
    then the subtree $\hwTree(I_9, 2)$ rooted at $N$ 
    costs at most 209 for $(I_9, 2)$.
  \end{lemma}      


  \begin{proof}
    Abusing notation, for $1\le i < j \le 9$,
    let $[i,j]$ denote the $i$th through $j$th keys in interval
    $I_9$, as shown in Fig.~\ref{fig: huang flaw}.
    (See also Fig.~\ref{fig: huang small} for intuition.)

    \begin{table}[t]\centering
    \renewcommand*{\arraystretch}{1.2}
    \[
      \begin{array}
        {@{~} c @{~~~} || @{~~} c @{~~} | @{~~} c @{~~~~} c  @{~~} | @{~~} c @{~~~~} c}
        I, h & s,h_1, h_2 & \text{left} & \text{right}
      & \text{cost for $\hwTree(I,h)$} & \text{holes}
      \\ \hline
      {[1,5]}, 4 & \multicolumn{2}{c}{\text{}}  &
      & 10 & \keyAone, \keyAtwo, \keyAthree, \keyBfour
      \\
      {[6,6]}, 0 & \multicolumn{3}{c| @{~~}}{\text{singleton cases}}  
      & 5 & \text{none}
      \\
      {[7,8]}, 1 & \multicolumn{2}{c}{\text{}}  &
      & 10 & \keyDone
      \\
      {[9,9]}, 0 & \multicolumn{2}{c}{\text{}} &
      & 10 & \text{none}
      \\ \hline
      {[1,6]}, 3 & \keyCzero, 4, 0 & {[1,5]}, 4 & {[6, 6]}, 0
      & 10 + 5 + 35 = 50 & \text{three of } \keyAone, \keyAtwo, \keyAthree, \keyBfour
      \\
      {[7, 9]}, 0 & \keyEzero, 1, 0 & {[7,8]}, 1 & {[9, 9]}, 0
      & 10 + 10 + 42 = 62 & \text{none}
      \\ \hline
      {[1,9]}, 2 & \keyDzero, 3, 0 & {[1,6]}, 3 & {[7,9]}, 0
      & \le 50 + 62 + 97 = 209 &\text{two of } \keyAone, \keyAtwo, \keyAthree, \keyBfour
      \\ \hline
    \end{array}
    \]
    \caption{The HW-subproblems used to solve HW-subproblem $(I_9, 2)$, with $I_9=[1, 9]$.}\label{table}
    \end{table}
      
    Consider Table~\ref{table}. 
    Each row of the table is for one HW-subproblem $(I, h)$ (shown in the leftmost column),
    and demonstrates that the cost of the tree $\hwTree(I,h)$ computed by the algorithm
    for that subproblem is as shown in the fifth column (``cost for $\hwTree(I,h)$'').
    The last column lists the keys that are holes in $\hwTree(I, h)$.
    The first four rows are singleton cases (key sets of size one), and
	their correctness and optimality can be verified by straightforward inspection.
    For each subsequent row, the second column gives
    one of the triples $(s, h_1, h_2)$ considered by the algorithm
    for the given HW-subproblem $(I,h)$, where $s$ is the split key,
    and $h_1$ and $h_2$ are the numbers of holes allocated to the left and right subtrees.
    Columns ``left'' and ``right'' show the left and right
    HW-subproblems that follow from that choice of $(s, h_1, h_2)$,
    and column ``cost for $\hwTree(I,h)$'' gives the cost of tree $T(\KLT, \hLT, \hGE)$
	resulting from that choice. 
    Likewise the final column ``holes'' describes the possible hole sets
    (in order to achieve the given cost,
    covering all ways to break ties).
    For the HW-subproblems in rows five and six, the choices of $(s, h_1, h_2)$ in the table are optimal.
    For the seventh subproblem, the cost of 209 is an upper bound (in fact it is optimal, but we don't need that here).
    Each row can be verified by manual computation
    assuming inductively that the previous rows are correct.
	
    
    \begin{linenomath*}
    To illustrate how to verify the rows, we explain the
    information included in the 5th row, for HW-subproblem $(I,h) = ([1,6],3)$.
    This subproblem involves interval $[1,6]$ that consists of
    keys $\keyAone$, $\keyAtwo$, $\keyAthree$, $\keyBzero$, $\keyBfour$, $\keyCzero$,
    with $3$ of the keys being holes.
    For the choice $(s,h_1,h_2) = (\keyCzero,4,0)$ in the algorithm (the 2nd column),
    the left and right HW-subproblems will be $([1,5],4)$ and $([6,6],0)$ (the 3rd and 4th column).
    Their solutions are  summarized in the 1st and 2nd row of the table.
    (These solutions are: $\hwTree([1,5],4)$ contains only node $\keyBzero$,
    and $\hwTree([6,6],0)$ contains only node $\keyCzero$.)
    The algorithm will then choose any key from $\keyAone$, $\keyAtwo$, $\keyAthree$, $\keyBzero$,  
    as the equality key in the root of tree $T(\keyCzero,4,0)$,
    since they all have the same weight $20$.
    The weight of $T(\keyCzero,4,0)$ is then $35$, so its cost will be $50$,
    and the holes will be any three keys among $\keyAone$, $\keyAtwo$, $\keyAthree$, $\keyBzero$.
    (Note: another choice in the algorithm that gives the same tree 
    is $(s,h_1,h_2) = (\keyBfour,3,1)$.)
    As claimed in the paragraph above, this tree $T(\keyCzero,4,0)$ is an optimal solution for
    HW-subproblem $([1,6],3)$, that is $T(\keyCzero,4,0) = \hwTree([1,6],3)$.
    Indeed, $T(\keyCzero,4,0)$ is the only tree for $([1,6],3)$ that contains only one key of weight $20$,
    and any tree that has two keys of weight $20$
    will have cost at least $60$.
    \hfill\qed
    \vspace*{-1em}

    ~
  \end{linenomath*}
\end{proof}    


By Lemmas~\ref{lemma: K p} and~\ref{lemma: I9}, the tree $T$ computed by Huang and Wang's
algorithm for $(\Keys, p)$ cannot be optimal: Lemma~\ref{lemma: K p} states that
\emph{all} optimal trees for $(\Keys, p)$ contain a node with a certain property, while
Lemma~\ref{lemma: I9} states that $T$ does not contain such a node.
This proves Theorem~\ref{thm: huang flaw}.


For empirical verification,
note that executing the algorithm on $(\Keys, p)$, via the Python code in Appendix~\ref{sec: code huang},\footnote
{
  The code there is modified to return, for each subproblem,
  not just one tree but all ``candidate'' trees of minimum cost,
  where a candidate is a tree that the recurrence could consider by \emph{any} way of breaking ties.
  If any way of breaking ties will solve all of the relevant subproblems optimally,
  this simulation will find it.}
returns a tree of cost 1763.
This tree is not optimal, as $\hfigtree{huang flaw}{b}$ costs 1762.
The tree does have the HW-subproblem $(I_9,2)$,
and executing the algorithm directly on that subproblem does return a tree of cost 209.


\paragraph{Remark on Chen and Liu's algorithm for \myproblem{multiway}\!\! \gbsplit.}
Chen and Liu's algorithm~\cite{chen_optimal_1991} and analysis
are patterned directly on Huang and Wong's,
and the proofs they present also conflate (their equivalents of)  $\optstar{I,h}$ and $\opt{I,H}$,
leading to the same problems with optimal substructure.
For example, Property 1 of~\cite{chen_optimal_1991} states 
\emph{``Any subtree of an optimal $(m+1)$-way generalized split tree is optimal.''}
They do not define ``optimal'',
so their Property 1 has the same problem as Huang and Wong's Lemma 1:
it is true if ``optimal'' means ``with respect to their equivalent of $\opt{I,H}$'',
but does not necessarily hold if ``optimal'' means ``with respect to their equivalent of $\optstar{I,h}$''.
Lemmas 2, 3 and 4 of~\cite{chen_optimal_1991},
which state\\
the recurrence relations for their dynamic program,
are direct generalizations of Huang and Wong's Lemma 4.
Their recurrence chooses equality keys by first finding optimal subtrees for the children,
then taking the equality keys to be the least-likely keys that are not equality keys in the children's subtrees.
As pointed out in the proof of Theorem~\ref{thm: huang flaw},
correctness of this approach requires the optimal-substructure property
to hold with respect to $\optstar{I,h}$.  But it does not.
For these reasons, their proof of correctness is not valid.
We believe that their algorithm for \myproblem{multiway} \gbsplit is also incorrect,
but describing their algorithm and analysis in detail,
and giving a complete counter-example, are out of the scope of this paper.

\smallskip



\section{A \twoWCST algorithm by Spuler fails on some subproblems}%
\label{sec: spuler flaw}%


\newcommand{\EPS}{5}
\newcommand{\DELTA}{7}
\newcommand{\sfigtree}[2]{\ensuremath{T_{\ref{fig: spuler counter #1}#2}}}

\smallskip

This section concerns \twoWCST, the problem of computing an optimal \emph{two-way comparison} search tree,
given a set $\Keys$ of $n$ keys and their weight distribution $p$.
Such a tree $T$ is a rooted binary tree, where each non-leaf node $N$ has two children,
as well as a key $k_N\in\Keys$ and a binary comparison operator (equality or less-than).
Denote such a node by $\compnode{=}{k_N}$ or $\compnode{<}{k_N}$, depending on which
comparison operator is used. The tree $T$ has $n$ leaves, each labeled with a unique key in $\Keys$.


\begin{figure}[t]\centering
  \noindent%
  \input{TREE_Spuler_example.tex}%

  \caption{
      A two-way-comparison search tree with keys $\Keys=\{1,2,3,4,5,6\}$.
      Below each leaf is its weight. The cost of this tree is 
	  $0.6\cdot 1 + 0.1\cdot 3  + 0.1\cdot 3 +  0.05\cdot 4  +  0.05\cdot 4 + 0.1\cdot 3 = 1.9$.
  }\label{fig: spuler example}
\end{figure}


The search for a query $\query$ in $T$ starts at the root. If the root is a leaf, the search halts.
Otherwise, it compares $\query$ to the root's key using the root's comparison operator,
then recurses left if the comparison succeeds, and right otherwise.
For the tree to be correct,\footnote
{Note that, in contrast to \gbsplit trees, there are no comparisons at leaf nodes. 
  For simplicity, we discuss here only the successful-queries variant,
  in which only queries in $\Keys$ are allowed.}
the search for any query $\query\in\Keys$
must end at the leaf that is labeled with $\query$.
Given an instance $(\Keys, p)$, 
  the problem is to find a tree that minimizes the weighted average depth of the leaves
  (in the case that $p$ is a probability distribution, this is the expected number of comparisons
  in a search for a query $\query$ drawn randomly according to $p$).
  Figure~\ref{fig: spuler example} shows an example.

Spuler's thesis proposed various  algorithms for \twoWCST and for \gbsplit,
for both the successful-queries variant and the general 
variant~\cite{Spuler1994Thesis}.\footnote{We remark 
  that Spuler~\cite[Section 4.8]{Spuler1994Thesis} 
  pointed out, and claimed to fix, several flaws in the \emph{pseudo-code}
  that Huang and Wong gave for their \gbsplit algorithm.
  Those flaws are relatively minor
  and do not include the deeper errors discussed in Sect.~\ref{sec: huang flaw}.}
Here we discuss the (successful-queries) \twoWCST algorithm that Spuler presented
as a modification of Huang and Wong's \gbsplit algorithm
in Section 6.4.1 of his thesis~\cite[Section 6.4.1]{Spuler1994Thesis}.
That section starts with the following remark:
\begin{quote}\em
  ``The changes to the optimal generalized binary split tree algorithm of Huang 
and Wong~\cite{StephenHuang1984} to produce optimal generalized two-way comparison trees are quite straight forward.''
\end{quote}
(``Generalized two-way comparison trees'' in the thesis
are two-way comparison search trees as defined herein.)
The remainder of his Section 6.4.1 sketches the code
  for the algorithm.  His Appendix A.4.1.~gives complete code.
  Spuler does not explicitly define the dynamic program or recurrence that he has in mind,
  however, it is implicitly defined by his algorithm as described below.
In addition to lacking proofs of correctness,
these algorithms have not appeared in any peer-reviewed publication,
although Spuler did refer to them in his journal paper~\cite{Spuler1994Paper},
and they have been cited in the literature as the first polynomial-time algorithms for \twoWCST~\cite{Anderson2002}.

Following Huang and Wong, Spuler's algorithms are based on a dynamic program
where each subproblem is specified by an interval of keys and a number of holes,
and each subproblem is solved using a recurrence relation. In the remainder of the section, we prove that
the dynamic program is flawed:


\begin{theorem}\label{thm: spuler flaw}
There is an instance $(\Keys, p)$ of \twoWCST for which  the dynamic program used by
Spuler's \twoWCST algorithm~\cite[Section 6.4.1]{Spuler1994Thesis}  has the following flaws:
for some subproblems, the recurrence relation is incorrect and the algorithm computes non-optimal solutions.
\end{theorem}


Note that Theorem~\ref{thm: spuler flaw} does not imply that the algorithm is incorrect,
in the sense that it gives an incorrect solution to some full instance (where the number $h$ of holes is $0$). 

Following Huang and Wong, the dynamic program implicit in Spuler's algorithm has a subproblem $(I, h)$
for each query interval $I$ and number of holes $h$. 
In what follows we call any such subproblem $(I, h)$ an \emph{S-subproblem}. 
The definition of a correct tree for an S-subproblem is a natural extension of the definition for full instances:
a correct tree for $(I,h)$ must have exactly $|I| - h$ leaves, 
each labeled with a unique key from $I$;
however, all keys in $I$ can be used as inequality-comparison keys.
We use $\optstar{I,h}$ to denote the minimum cost of any tree for S-subproblem $(I,h)$. 
The underlying flaw is the same as in Huang and Wong's dynamic program---S-subproblems do not have optimal substructure.

Given any S-subproblem $(I, h)$, where $I=[i,j]$ and the subproblem size $|I|-h$ is more than one, Spuler's
algorithm computes a tree $\tau(I,h)$ for S-subproblem $(I, h)$
  by combining trees $\tau(I',h')$ that it has computed for smaller S-subproblems, as follows:


\begin{center}
\begin{minipage}{4.7in}
\hrule
\medskip
\renewcommand{\labelenumii}{\theenumii}
  \renewcommand{\theenumii}{\theenumi.\arabic{enumii}.}

\begin{enumerate}

\item Construct one \emph{candidate tree} with an equality test at the root, as follows: 
  
  \begin{enumerate}[label*=\arabic*.]
	  
	  \item 
	 Let $\KEQ$ be a least-likely key in $I$ that is not a leaf in $\tau(I, h+1)$.

       \item The candidate tree has root $\compnode = \KEQ$ and  right subtree  $\tau(I, h+1)$.
	\end{enumerate}


\item 
  For $\KLT\in [i+1, j]$ and $(\hLT, \hGE)$ 
  s.t.~$\hLT + \hGE = h$, $\KLT-i - \hLT \ge 1$ and $j-\KLT+1-\hGE \ge 1$:

  \begin{enumerate}[label*=\arabic*.]
  \item Make a candidate tree 
    with root $\compnode < \KLT$
    and subtrees $\tau([i, \KLT-1], \hLT)$, $\tau([\KLT, j], \hGE)$.
  \end{enumerate}
  
\item Among the candidate trees so constructed, let $\tau(I,h)$ be one of minimum cost.

\end{enumerate}
\hrule
\end{minipage}
\end{center}

\paragraph{Remarks.}
  The algorithm is not hard to implement. Appendix~\ref{sec: code spuler} gives Python code (42 lines).
    As noted earlier, Spuler does not explicitly define his dynamic program or recurrence relation
    for $\optstar{I,h}$, however, it is implicitly defined by his algorithm and his assumption that each tree
  $\tau(I,h)$ is optimal for S-subproblem $(I, h)$ (so has cost $\optstar{I, h}$).

  Ties may arise in choosing the minimizers in Lines 1.1 and 3,
  but Spuler does not discuss ties.  We'll show that his recurrence relation
  is incorrect no matter how ties are broken.

  Given an S-subproblem $(I, h)$, Spuler's algorithm constructs its tree $\tau(I, h)$
  out of trees $\tau(I', h')$ that it built for smaller S-subproblems.
  This only works if S-subproblems have optimal substructure.
  To complete the proof of Theorem~\ref{thm: spuler flaw}, we show that they do not:


\begin{theorem}\label{thm: no optimal substructure spuler}
  There exists a \twoWCST\ S-subproblem $(I,h)$ with the following property.
  In every optimal tree $T^*$ for $(I, h)$,  there is at least one node $N$ such that,
  for the S-subproblem $(I_N,h')$ arising at $N$,
  the subtree $T^*_N$ rooted at $N$ in $T^*$
  does \emph{not} have minimum cost, $\optstar{I_N,h'}$, for that S-subproblem.\footnote
  {Note that $h'$ is $h$ plus the number of equality tests
    on the path from the root of $T^*$ to $N$.  The algorithm does not
    determine which keys are used in those equality tests until after it solves  $(I_N, h')$.}
  \end{theorem}


\begin{proof}  
Before we describe the full S-subproblem $(I,h)$,
we describe one smaller S-subproblem $(I',h')$ for which using a minimum-cost tree $T'$
can be a bad choice globally.
It is $(I_{8}, 1)$, with one hole and interval $I_8$ having keys $\{1,2,\ldots,8\}$ whose weights are as follows:

\begin{quote}\centering
  \begin{tabular}{|r||c|c|c|c|c|c|c|c|c||c|}\hline
    \text{key}    & 1 & 2 & 3 & 4 &  5 &  6 & 7 & 8 \\ \hline
    \text{weight} & \DELTA & \EPS & 0 & \EPS  & 0 & \EPS & 0 & \EPS  \\ \hline
  \end{tabular}
\end{quote}
\smallskip 

\begin{figure}[t]\centering
  \noindent%
  \input{TREE_Spuler_counter_example_4.tex}%

  \caption{
    Three trees (circled and lightly shaded) for S-subproblem $(I_8, 1)$.
   \sfigtree 4 a has cost 49 and weight 22. 
    \sfigtree 4 b and \sfigtree 4 c have cost 50 but weight 20.
    \sfigtree 4 a is optimal for $(I_8, 1)$.
    Among trees that don't contain the weight-7 key 1,
    trees \sfigtree 4 b and \sfigtree 4 c have minimum cost.
    Subtrees marked with $0$ (dark shaded) contain keys of weight $0$.
  }\label{fig: spuler counter 4}
\end{figure}    


Figure~\ref{fig: spuler counter 4} shows three possible 
subtrees \sfigtree 4 a,  \sfigtree 4 b, and \sfigtree 4 c for $(I_8, 1)$.
By inspection, \sfigtree 4 a has cost 49 for S-subproblem $(I_{8}, 1)$,
while \sfigtree 4 b and \sfigtree 4 c cost 50 but weigh 2 units less.
Suppose, in a larger tree, that \sfigtree 4 a occurs as the left child of a node $N$,
as shown in Fig.~\ref{fig: spuler counter 4}. Let $T_N$ be the subtree rooted at $N$.
Suppose that the interval of the left child of $T_N$ (that is, the root of \sfigtree 4 a)
contains all keys $1,2,\ldots,8$. (For example, $N$ might be $\compnode{<}{9}$.)
Then replacing \sfigtree 4 a by \sfigtree 4 b would reduce the overall cost by at least 1 unit.
This is because the contribution of \sfigtree 4 a to the cost of $T_N$
is not the cost of  $\sfigtree 4 a$; rather, it is its cost \emph{plus its weight},
and the cost plus weight of \sfigtree 4 b is 1 unit less.

Next we construct the full S-subproblem $(I, h)=(I_{15},2)$
for Theorem~\ref{thm: no optimal substructure spuler}.
It has two holes, and extends the above S-subproblem $(I_{8}, 1)$
to a larger interval $I_{15} = \{1,2,\ldots,15\}$ with the following symmetric weights:

\begin{linenomath*}
\[\setlength\arraycolsep{4.5pt}
  \begin{array}{|r||c|c|c|c|c|c|c|c|c|c|c|c|c|c|c|c|c|c|}\hline
    \text{key}    & 1 & 2 & 3 & 4 &  5 &  6 & 7 & 8 & 9 & 10 & 11 & 12 & 13 & 14 & 15 
    \\ \hline
    \text{weight} & \DELTA & \EPS & 0 & \EPS & 0 & \EPS & 0 & \EPS & 0 & \EPS & 0 & \EPS & 0  & \EPS & \DELTA
    \\ \hline
  \end{array}
\]
\end{linenomath*}
We use the following terminology to distinguish the different types of keys in a subtree.
Given an S-subproblem $(I',h')$ of $(I,h)$, and a tree $T'$ for $(I', h')$, the
keys of $I'$ that appear in the leaves of $T'$ are \emph{$T'$-queries}.
The other keys in interval $I'$, which are holes in $T'$, are \emph{$T'$-holes}.
(We don't introduce new terminology for the comparison keys in $T'$.)
We drop the prefix $T'$ from these terms when it is understood from context.

To analyze $(I_{15},2)$ we need some utility lemmas. We start with
one that will help us characterize how weight-0 queries increase costs.
This lemma (Lemma~\ref{lemma: depth} below) is in fact general and it
holds for S-subproblems of an arbitrary instance of \twoWCST.
Define two integer sequences $\braced{d_m}$ and $\braced{e_m}$,
as follows: $d_1 = 0$, $d_2 = 3$, $e_1 = 0$, $e_2 = 2$, $e_3 = 6$, and

\begin{linenomath*}
\begin{align*}
	d_m &~=~ m\, + \,\min\,\{d_i + d_{m-i} : 1\le i < m \}\,\quad\text{for $m\ge 3$,}
	\\
	e_m &~=~ m\, + \,\min\,\{d_i + e_{m-i} : 1\le i < m \}\,\quad\text{for $m\ge 4$.}
\end{align*}
\end{linenomath*}
 By calculation,
$d_3 = 6$, $d_4 = 10$, $d_5 = 14$,
$e_4 = 9$, $e_5 = 13$, and $e_6 = 18$.

Consider a tree $T'$ for an S-subproblem of some 
arbitrary instance of \twoWCST (not necessarily our specific instance $(\Keys,p)$). 
A subset $Q$ of $T'$-queries will be called \emph{$T'$-separated} (or simply \emph{separated}, if $T'$ is
understood from context) if for any two $k,k'\in Q$, with $k < k'$,
there is a $T'$-query $k''$ that separates them, that is
$k < k'' < k'$. Also, if $Q\setminus\braced{f}$ is $T'$-separated for some $f\in Q$,
then we say that $Q$ is \emph{nearly $T'$-separated}.


\begin{lemma}\label{lemma: depth}
Let $T$ be a tree for an S-subproblem of some arbitrary instance of \twoWCST. 
Let $Q$ be a set of $T$-queries and $m=|Q|$.
\emph{(i)} 
	If $Q$ is $T$-separated then  
the total depth (i.e., the sum of the depths) in $T$ of the keys in $Q$ is at least $d_m$.
\emph{(ii)} 
	If $Q$ is nearly $T$-separated then the total depth in $T$ of the keys in $Q$ is at least $e_m$.
\end{lemma}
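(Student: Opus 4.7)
The plan is to prove (i) and (ii) simultaneously by structural induction on $T$, treating $m \le 2$ for (i) and $m \le 3$ for (ii) as base cases handled by direct counting. For example, when $m = 2$ in (i), the mandatory separator forces $T$ to have at least three leaves, with depths $1, 2, 2$ in the cheapest shape, so any two of them have depth sum at least $3 = d_2$; analogously, $m = 3$ in (ii) forces $T$ to have at least four leaves and a short case check gives $e_3 = 6$. For the inductive step, let $N$ be the root of $T$ with subtrees $T_L, T_R$, partition $Q$ accordingly into $Q_L$ and $Q_R$, and observe that the total depth of $Q$ in $T$ equals $m$ plus the depth sums of $Q_L$ in $T_L$ and $Q_R$ in $T_R$.

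When $N$ is a comparison $\compnode{<}{k}$, separation descends cleanly: any $T$-separator between two elements of $Q_L$ is strictly less than $k$ and hence a $T_L$-query, so $Q_L$ is $T_L$-separated (and symmetrically for $Q_R$); under (ii), the side containing the distinguished $f \in Q$ is nearly separated while the other remains separated. Plugging the inductive hypothesis into the defining recurrences for $d_m$ and $e_m$ immediately closes the case (the degenerate situation in which all of $Q$ falls in one subtree reduces to applying the inductive hypothesis to the smaller subtree with the same $Q$).

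When $N$ is an equality test $\compnode{=}{k}$, we may assume without loss of generality for a lower bound that $k$ is itself a $T$-query and the yes-branch is the single leaf $k$, so $T_R$ contains all remaining $T$-queries. The central issue is whether $Q \cap T_R$ is still (nearly) $T_R$-separated, since the only $T$-query missing from $T_R$ is $k$. If $k \in Q$ and $Q$ is $T$-separated, a clean substitution works: whenever the $T$-separator of a pair $(a, b) \in (Q \setminus \{k\})^2$ happens to be $k$ itself (so $a < k < b$), the $T$-separator of $(a, k)$, which exists because $a, k \in Q$ and $Q$ is $T$-separated, is strictly less than $k$ and hence a $T_R$-query that also separates $(a, b)$. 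Thus $Q \setminus \{k\}$ is $T_R$-separated, the inductive hypothesis (i) gives depth sum at least $d_{m-1}$ in $T_R$, and the recurrence inequality $d_m \le m + d_{m-1}$ (valid for $m \ge 3$) completes this sub-case. An analogous substitution handles (ii) when $k \in Q$ and $f \ne k$.

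The main obstacle is the sub-case of (ii) where $k \notin Q$ and the distinguished $f$ is not one of the innermost $Q$-elements $a^*, b^*$ on either side of $k$, and the unique $T$-separator of $(a^*, b^*)$ is $k$ itself. Here the substitution argument fails for this pair, and making $Q$ into a $T_R$-separated set requires removing both $f$ \emph{and} $a^*$. We instead invoke inductive hypothesis (i) on the $T_R$-separated set $Q \setminus \{f, a^*\}$ to bound its depth sum in $T_R$ by $d_{m-2}$, and add two additional units for the depths of $f$ and $a^*$ in $T_R$ (each at least $1$, since $T_R$ has at least $m \ge 4$ leaves); the needed bound $m + d_{m-2} + 2 \ge e_m$ then follows from the recurrence for $e_m$ with split $j = m-2$ and $e_2 = 2$. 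For case (i) with $k \notin Q$, a parallel analysis shows $Q$ is at worst nearly $T_R$-separated, so inductive hypothesis (ii) yields depth sum at least $e_m$ in $T_R$, and an auxiliary inequality $m + e_m \ge d_m$ (equivalently $d_m - e_m \le m$, easily shown by induction from the recurrences) closes this final case.
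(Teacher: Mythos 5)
Your overall strategy is sound and it bottoms out in the same recurrences as the paper, but the route is genuinely different. The paper's proof is extremal: for each $m$ it fixes a pair $(T,Q)$ minimizing the (modified) cost, and minimality is what disposes of the awkward configurations --- a less-than root with all of $Q$ on one side cannot be minimal, and a root equality test whose key lies outside $Q$ could be spliced out, so in the paper's equality case the root key is automatically in $Q$ and the bound $m+d_{m-1}$ (resp.\ $m+e_{m-1}$) falls out immediately. You instead do plain structural induction on $T$ and meet those configurations head on, which costs you (a) the explicit substitution argument that separation survives passage to the no-subtree, and (b) the cases where the equality key is not in $Q$, which you settle with the auxiliary inequalities $d_m\le m+e_m$ and $e_m\le m+d_{m-2}+e_2$; both do follow from the recurrences as you claim, and your ``main obstacle'' step (drop $f$ and $a^*$, pay $2$ for their depths in $T_R$) is valid. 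One structural consequence: your part (i) invokes part (ii) (in the equality case with $k\notin Q$), so for you the simultaneous induction is essential, whereas the paper's part (i) stands alone. Your version is longer but more explicit; the paper's minimality trick buys brevity at the price of leaving the separation bookkeeping implicit.

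Two small points. First, your equality-root analysis for part (ii) omits the sub-case $k\in Q$ with $f=k$: there the substitution argument is unavailable (it needs a separator for the pair $(a,k)$, which requires $k$ to belong to the separated set $Q\setminus\{f\}$), so the pair of $Q\setminus\{k\}$ straddling $k$ may lose its only separator in $T_R$. The fix is exactly the move you already use for (i) with $k\notin Q$: $Q\setminus\{k\}$ is at worst nearly $T_R$-separated, the inductive hypothesis (ii) gives $e_{m-1}$, and $e_m\le m+e_{m-1}$ closes it --- so this is a patchable omission, not a failure of the approach. Second, your base-case claim that near-separation with $m=3$ forces four leaves (like the paper's unproved $m=3$ base case) implicitly reads the definition so that the separators of $Q\setminus\{f\}$ are queries other than $f$; under the paper's literal wording, a three-leaf tree whose middle key plays the role of $f$ would be nearly separated with total depth $5<e_3$. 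That imprecision is the paper's, and the stricter reading is the one matching every application of the lemma, but it is worth stating it explicitly in your write-up since your four-leaf count depends on it.
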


The proof of Lemma~\ref{lemma: depth} is a straightforward induction---we 
postpone it to the end of this section, and proceed with our analysis.


Now we focus our attention on our instance $(\Keys,p)$, and
we characterize the weights and costs of optimal subtrees for certain subproblems.
For $1\le \ell\le 14$,
let $I_\ell=\{1,2,\ldots,\ell\}$ denote the subinterval of $I_{15}$ containing its first $\ell$ keys.
These keys have $\ell$ weights (in order) $\{7, 5, 0, 5, \ldots\}$:
one key of weight 7, then $\lfloor \ell/2\rfloor$ even keys of weight 5,
separated by odd keys of weight 0.
Let $\nofposvalues{\ell} =  1+\lfloor \ell/2\rfloor$ be the number of positive-weight keys in $I_\ell$.
Note that each S-subproblem $(I_\ell, h')$
can be solved by a tree with $\nofposvalues{\ell}-h'$ positive-weight queries,
having $h'$ (positive-weight) hole keys.


\begin{lemma}\label{lemma: T 4}
  Consider any S-subproblem $(I_\ell, h')$ with $\ell \le 14$ and $\nofposvalues{\ell}-h' = 4$.
  Let $T'$ be an optimal tree for $(I_\ell, h')$.
  Then $T'$ has weight 22 and cost 49 (like \sfigtree 4 a).
\end{lemma}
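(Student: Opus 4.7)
The plan is to combine Lemma~\ref{lemma: depth} with a case analysis on the number $q$ of positive-weight (weight-$5$ or weight-$7$) queries in $T'$. The assumption $\nofposvalues{\ell}-h'=4$ forces $q\in\{4,5,\ldots,\nofposvalues{\ell}\}$, and the case $q=4$ holds exactly when all $h'$ holes are positive-weight, so every zero-weight key of $I_\ell$ is a $T'$-query (and thus serves as a separator between consecutive weight-$5$ queries). Writing $Q_+$ for the set of positive-weight queries and $d_k$ for the edge-depth of leaf $k$ in $T'$, the cost of $T'$ equals $\sum_{k\in Q_+} p_k\, d_k$, since zero-weight queries contribute nothing.

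For the lower bound I split into cases. (i) If $q=4$ and key $1$ is a hole (tree weight $20$), then $Q_+$ is a set of four weight-$5$ keys, each consecutive pair separated by a zero-weight query, so by Lemma~\ref{lemma: depth}(i) the depths of $Q_+$ sum to at least $d_4=10$ and the cost is at least $50$. (ii) If $q=4$, key $1$ is a query, but key $2$ is a hole (tree weight $22$), then $Q_+$ is again fully separated (key $3$ separates key $1$ from the next weight-$5$ query), so the cost is at least $5d_4+2d_1\ge 52$. (iii) If $q=4$ and both key $1$ and key $2$ are queries (tree weight $22$), then $Q_+$ is only nearly $T'$-separated (removing key $1$ leaves a separated set), so Lemma~\ref{lemma: depth}(ii) gives $\sum_{k\in Q_+} d_k \ge e_4=9$, and the cost equals $2d_1 + 5\sum_{k\in Q_+} d_k \ge 2d_1 + 45$. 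This is $\ge 49$ when $d_1\ge 2$; when $d_1=1$, leaf $1$ is a root child, so the three remaining weight-$5$ queries all lie in the opposite subtree (rooted at depth $1$), where they remain separated by zero-weight queries, and their depth sum in $T'$ is therefore at least $d_3+3=9$, giving cost $\ge 7+45=52$. (iv) If $q\ge 5$, the positive-weight query set is still nearly $T'$-separated (the at most one zero-weight hole can break one would-be separator, but removing one endpoint of the broken pair repairs separation), so Lemma~\ref{lemma: depth}(ii) applies with $e_5=13$ (or Lemma~\ref{lemma: depth}(i) with $d_5=14$ when key $1$ is a hole), yielding cost $\ge 67$; for even larger $q$ the bound only grows.

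For the matching upper bound I exhibit an explicit tree of weight $22$ and cost $49$. For $\ell=8$, $h'=1$, this is exactly $\sfigtree 4 a$ of Fig.~\ref{fig: spuler counter 4}. For $\ell\in\{10,12,14\}$ the subproblem differs only by additional zero-weight queries and additional positive-weight holes; one inserts the extra zero-weight queries into the ``$0$''-marked subtrees of $\sfigtree 4 a$ (which hold only zero-weight leaves) without altering the cost, since zero-weight leaves contribute $0$ to the cost regardless of their depth, and one chooses the positive-weight holes so that the positive-weight queries kept are exactly $\{1,2,k_3,k_4\}$ with the same structural role as in $\sfigtree 4 a$.

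The main obstacle is the sub-case $q=4$, weight $22$, $d_1=1$, which is \emph{not} ruled out by the generic nearly-separated bound $e_4=9$ (this only gives cost $\ge 47$); one must leverage the structural fact that placing leaf $1$ at depth $1$ confines the three remaining weight-$5$ queries to a single sibling subtree, where the sharper separated bound $d_3=6$ applies and, after accounting for the extra $+1$ depth shift per leaf, boosts the total cost from $47$ to $\ge 52$. A secondary subtlety is verifying that for $q\ge 5$ the positive-weight query set retains near-separation in spite of the one zero-weight hole that may break a separator, so that Lemma~\ref{lemma: depth}(ii) does apply.
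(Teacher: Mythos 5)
Your overall strategy (upper bound via a \sfigtree 4 a-type tree, lower bound via Lemma~\ref{lemma: depth}, and the key sub-case where the weight-7 key sits at depth~1, handled by confining the three weight-5 queries to the sibling subtree where the sharper bound $d_3$ applies) is essentially the paper's argument, and your cases (i)--(iii) are sound. The genuine gap is in case (iv). You justify near-separation of the positive-weight query set by saying there is ``at most one zero-weight hole,'' but that is false: the number of zero-weight holes equals $q-4$ and can be as large as $h'$, and for $\ell\ge 10$ the hypothesis $\nofposvalues{\ell}-h'=4$ allows $h'\in\{2,3,4\}$. With two or more zero-weight holes the positive-weight queries need not even be nearly separated --- e.g.\ with holes at keys $3$ and $7$, both pairs $(2,4)$ and $(6,8)$ lose their separators, and no single deletion restores separation --- so Lemma~\ref{lemma: depth}(ii) cannot be invoked on the full set, and ``for even larger $q$ the bound only grows'' does not repair this, since the larger constants $e_q$ come with a hypothesis that now fails. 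The same issue affects your parenthetical use of $d_5$ when key $1$ is a hole.

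The paper closes exactly this case by a counting argument rather than by applying the depth bound to the whole set: with $h_0\ge 1$ zero-weight holes, one can still extract a \emph{separated} subset of $q_5-h_0$ weight-5 queries (each zero-weight hole destroys at most one member), and the fact that four separated weight-5 queries would already force cost $\ge 5d_4=50>49$ pins down $q_5=h_0+3$, hence the weight-7 key is a query together with three separated weight-5 queries; these four alone cost at least $49$ by the same depth-$1$/depth-$\ge 2$ analysis as in your case (iii), and the at least one remaining weight-5 query at depth $\ge 1$ pushes the cost strictly above $49$, a contradiction. You need this (or an equivalent extraction-plus-counting step) to make case (iv) go through. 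A minor further point: your explicit upper-bound construction only discusses $\ell\in\{8,10,12,14\}$, while the hypothesis also admits $\ell\in\{6,7,9,11,13\}$; the same construction works there, but it should be stated for all admissible $\ell$.
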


\begin{proof}
As $T'$ is fixed throughout the proof, the terms \emph{holes}, \emph{queries}, and \emph{separated}, 
mean \emph{$T'$-holes}, \emph{$T'$-queries}, and \emph{$T'$-separated}
as defined earlier, unless otherwise specified. 

Let $h_0$ be the number of weight-0 holes and $\nofposvalues{q}$ the number of queries with positive weight.
We use the following facts about $T'$.
    \begin{description}  
	 \setlength{\itemsep}{0.75em}
	
      \item[(F1)] \emph{$T'$ costs at most 49.}
        Indeed, one way to solve $(I_\ell,h')$ is as follows:
        take the $h'$ rightmost weight-5 keys in $I_\ell$ to be the holes,
        then handle the remaining $\nofposvalues{\ell}-h'=4$ queries with positive weight
        (queries $1,2,4,6$), along with any weight-0 queries $3,5,7,\ldots$,
        using tree \sfigtree 4 a, at cost 49. 
        
    \item[(F2)] \emph{$\nofposvalues{q} = 4+h_0$}. 
		This follows by simple calculation:
		$\nofposvalues{q} = \nofposvalues{\ell}-(h'-h_0) = 4+h_0$. 

    \item[(F3)]
      \emph{$T'$ does not contain four separated weight-5 queries.}
      Indeed, otherwise, by Lemma~\ref{lemma: depth}, $T'$ would cost at least $5\cdot d_4 = 50 > 49$,
      contradicting (F1).
	  
    \end{description}
    
    To finish we show that $T'$ costs at least 49. Along the way we show it has weight 22.
    
    \begin{outercases}
    \item[1] First consider the case that $h_0 = 0$.
      By (F2), \emph{there are $4$ positive-weight queries in $T'$.}
      Since $h_0=0$, all weight-0 keys are queries in $T'$,
      so the set of all weight-5 queries in $T'$ is separated,
      and by (F3), \emph{there are at most three such queries}.
      The fourth positive-weight query must be the weight-7 query, query 1.
      So  \emph{the positive-weight queries in $T'$ are the weight-7 query and three separated weight-5 queries.}

      So $T'$ has total weight 22, as desired.
      Further, by Lemma~\ref{lemma: depth}, the four positive-weight queries in $T'$ 
      have total depth at least $e_4$ in $T'$.
      So $T'$ costs at least $5\cdot e_4 + (7-5)\cdot j =  45+2j$,
      where $j$ is the depth of the weight-7 query.
      If $j\ge 2$, by the previous bound, $T'$ costs at least 49, and we are done.
      In the remaining case we have $j = 1$ (as $j=0$ is impossible),
      so the weight-7 query is a child of the root.
      The three weight-5 queries are in the other child's subtree (and are a separated subset there),
      so by Lemma~\ref{lemma: depth} have total depth at least $d_3=6$
      in that subtree, and therefore total depth at least 9 in $T'$.
      So the total cost of $T'$ is at least $7 + 5\cdot 9 > 49$, contradicting (F1). 

    \item[2] In the remaining case $h_0 \ge 1$.
      By (F2), there are $\nofposvalues{q} = 4+ h_0$ positive-weight queries in $T'$.
      Let $q_5 \ge \nofposvalues{q}-1$ be the number of weight-5 queries in $T'$.
      Since all but $h_0$ of the weight-0 queries are in $T'$,
      there is a separated set of $q_5-h_0$ weight-5 queries in $T'$.  By (F3),
      $q_5 - h_0 \le 3$.

      \begin{linenomath*}
      This (with $\nofposvalues{q} = 4+h_0$ and $q_5\ge \nofposvalues{q}-1$) implies 
	$q_5 = h_0 + 3 = \nofposvalues{q} - 1$. This implies that the weight-7 query is in $T'$,
      along with some $q_5-h_0 = 3$ separated weight-5 queries.
      Reasoning as in Case~1, the cost of these four queries alone is at least 49.
      But $T'$ contains at least one additional weight-5 query
      (as $q_5 = 3+h_0 > 3$), so $T'$ costs strictly more than 49,
      contradicting (F1). Thus Case~2 cannot actually occur.
      \hfill\qed\vspace{-1em}

      ~
    \end{linenomath*}
    \end{outercases}
\end{proof}


\begin{figure}[t]\centering
  \noindent%
  \input{TREE_Spuler_counter_example_5.tex}%

  \caption{
    Trees $\sfigtree 5 a$ and  $\sfigtree 5 b$, with five positive weight queries.
    Tree $\sfigtree 5 a$ has cost 69 and weight 27.
    Tree  $\sfigtree 5 b$ has cost 70 and weight 25.
  }\label{fig: spuler counter 5}
\end{figure}


\begin{lemma}\label{lemma: T 5}
  Consider any S-subproblem $(I_\ell, h')$ with $\ell \le 14$ and  $\nofposvalues{\ell}-h' = 5$.
  Let $T'$ be an optimal tree for $(I_\ell, h')$.
  Then $T'$ has weight 27 and cost 69 (like \sfigtree 5 a in Fig.~\ref{fig: spuler counter 5}).
\end{lemma}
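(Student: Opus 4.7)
The plan is to parallel the proof of Lemma~\ref{lemma: T 4} closely, scaling the argument from four positive-weight queries to five. Using the same conventions (\emph{holes}, \emph{queries}, \emph{separated} mean $T'$-holes, $T'$-queries, $T'$-separated; $h_0$ is the number of weight-$0$ holes in $T'$; $q_5$ the number of weight-$5$ queries; and $\nofposvalues{q}$ the number of positive-weight queries), I would first establish three facts analogous to (F1)--(F3): \emph{(F1$'$)} $T'$ costs at most $69$, witnessed by a tree of the form $\sfigtree 5 a$ that queries key~$1$ and the four leftmost weight-$5$ keys $2,4,6,8$ while turning the remaining weight-$5$ keys of $I_\ell$ into holes; \emph{(F2$'$)} $\nofposvalues{q}=5+h_0$, by the same counting as before; and \emph{(F3$'$)} $T'$ contains no separated set of five weight-$5$ queries, since otherwise Lemma~\ref{lemma: depth}(i) would give cost at least $5\cdot d_5 = 70 > 69$.

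Then I would split on $h_0$, mirroring the earlier case analysis. In Case~1, where $h_0 = 0$, every weight-$0$ key is a query, so the set of weight-$5$ queries is automatically separated; by (F3$'$) at most four of them appear in $T'$, so the fifth positive-weight query must be the weight-$7$ key (key~$1$), and the total weight of $T'$ is exactly $27$. The five positive-weight queries then form a nearly separated set (with key~$1$ as the distinguished element $f$), so Lemma~\ref{lemma: depth}(ii) gives total depth at least $e_5 = 13$, yielding cost at least $5\cdot 13 + 2j = 65 + 2j$, where $j$ is the depth of key~$1$. The subcase $j \ge 2$ already produces cost $\ge 69$. The remaining subcase $j = 1$ is excluded by a tighter estimate: if key~$1$ sits as a child of the root, the four weight-$5$ queries all lie in the sibling subtree and remain separated there, so by Lemma~\ref{lemma: depth}(i) their total depth inside that subtree is at least $d_4 = 10$, giving total cost at least $7 + 5\cdot 14 = 77$, contradicting (F1$'$).

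Case~2, where $h_0 \ge 1$, follows the template of Case~2 in Lemma~\ref{lemma: T 4}. Combining $q_5 \ge \nofposvalues{q} - 1 = 4 + h_0$ with the fact that $T'$ contains a separated subset of $q_5 - h_0$ weight-$5$ queries and with (F3$'$) forces $q_5 = 4 + h_0$, so the weight-$7$ key must belong to $T'$ together with four separated weight-$5$ queries. By the Case~1 reasoning these five queries alone already cost at least $69$; since $h_0 \ge 1$ there is at least one further weight-$5$ query whose contribution is strictly positive, so $T'$ costs strictly more than $69$, contradicting (F1$'$). Hence Case~2 cannot occur and Case~1 delivers the stated weight $27$ and cost $69$.

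The main obstacle is the same tightness issue encountered in Lemma~\ref{lemma: T 4}: the nearly-separated bound $e_5 = 13$ leaves only a slack of $4$ before reaching cost $69$, so the boundary subcase $j = 1$ cannot be handled by the $e_5$ bound directly and must be ruled out separately using the sharper separated bound $d_4$ on the sibling subtree of key~$1$.
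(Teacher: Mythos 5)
Your proposal is correct and follows essentially the same route as the paper's proof: the same facts (the cost-69 witness, $\nofposvalues{q}=5+h_0$, and the $5\cdot d_5=70$ exclusion of five separated weight-5 queries), the same split on $h_0$, the same $e_5$ bound with the $j=1$ subcase ruled out via $d_4$ on the sibling subtree, and the same counting contradiction for $h_0\ge 1$. Your explicit remark that the five positive-weight queries form a nearly separated set with key 1 as the exceptional element just makes precise what the paper leaves implicit when invoking Lemma~\ref{lemma: depth}(ii).
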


\begin{proof}
Again, throughout the proof, unless otherwise specified,
the terms \emph{holes}, \emph{queries}, and \emph{separated},
are all with respect to $T'$. Let $h_0$ be the number of weight-0 holes and 
$\nofposvalues{q}$ the number of queries with positive weight.
We use the following facts about $T'$.

\begin{description}
	
\item[(F4)] \emph{$T'$ costs at most 69.}
        Indeed, one can solve $(I_\ell,h')$ is as follows:
        take the $h'$ rightmost weight-5 keys in $I_\ell$ to be the holes,
        then handle the remaining $\nofposvalues{\ell}-h'=5$ queries with positive weight
        (queries $1,2,4,6,8$), along with any weight-0 queries $3,5,7,\ldots$,
        using tree \sfigtree 5 a at cost 69.  

\item[(F5)] \emph{$\nofposvalues{q} = 5+h_0$}. 
	This follows by straightforward calculation:
	$\nofposvalues{q} = \nofposvalues{\ell}-(h'-h_0) = 5+h_0$.
	
\item[(F6)]  \emph{$T'$ does not contain five separated weight-5 queries.} 
      Indeed, otherwise, by Lemma~\ref{lemma: depth},
      $T'$ would cost at least $5\cdot d_5 = 70 > 69$, a contradiction.
	  
\end{description}
    
    To finish, we show that $T'$ has cost at least 69.  Along the way we show it has weight 27.

    \begin{outercases}
    \item[1] First consider the case that $h_0 = 0$.
      By (F5), \emph{there are $5$ positive-weight queries in $T'$.}
      Also, since $h_0=0$, all weight-0 keys are queries in $T'$,
      so the set of all weight-5 queries in $T'$ is separated,
      and by (F6), \emph{there are at most four of them}.
      The fifth positive-weight query must be the weight-7 query, query 1.
      So \emph{the positive-weight queries in $T'$ are the weight-7 query and four separated weight-5 queries.}

      So $T'$ has total weight 27. Further, by Lemma~\ref{lemma: depth}, the five positive-weight queries in $T'$ 
      have total depth at least $e_5$ in $T'$. So $T'$ costs at least $5\cdot e_5 + (7-5)\cdot j =  65+2j$,
      where $j$ is the depth of the weight-7 query.
      If $j\ge 2$ then, by the previous bound, $T'$ costs at least 69, and we are done.
      In the remaining case we have $j = 1$ (as $j=0$ is impossible),
      so the weight-7 query is a child of the root. The four weight-5 queries are in the other child's subtree 
	(and form a separated set there), so by Lemma~\ref{lemma: depth} have total depth at least $d_4=10$
      in that subtree, and therefore total depth at least 14 in $T'$.
      So the total cost of $T'$ is at least $7 + 5\cdot 14 > 69$, contradicting (F4). 

    \item[2] In the remaining case, $h_0 \ge 1$.
      By (F5), there are $\nofposvalues{q} = 5+ h_0$ positive-weight queries in $T'$.
      Let $q_5 \ge \nofposvalues{q}-1$ be the number of weight-5 queries in $T'$.
      Since all but $h_0$ of the weight-0 queries are in $T'$,
      there is a separated set of $q_5-h_0$ weight-5 queries in $T'$. By
      (F6), $q_5 - h_0 \le 4$.

      \begin{linenomath*}
      This (with $\nofposvalues{q} = 5+h_0$ and $q_5\ge \nofposvalues{q}-1$) 
      implies $q_5 = h_0 + 4 = \nofposvalues{q} - 1$.
      This implies that the weight-7 query is in $T'$,
      along with some separated set of $q_5-h_0 = 4$ weight-5 queries.
      Reasoning as in Case~1, the cost of these five queries alone is at least 69.
      But $T'$ contains at least one additional weight-5 query
      (as $q_5 = 4+h_0 > 4$), so $T'$ costs strictly more than 69,
      contradicting (F4). Thus Case~2 cannot actually occur.
      \hfill\qed 

      ~
    \end{linenomath*}
  \end{outercases}
\end{proof}


To conclude the proof of Theorem~\ref{thm: no optimal substructure spuler},
we prove that $(I_{15}, 2)$ has the necessary properties:\hspace*{-1em}
\begin{lemma}\label{lemma: obstacle}
  Let $T^*$ be any optimal tree for S-subproblem $(I_{15}, 2)$.
    Then $T^*$ has at least one node $N$ such that,
    for the S-subproblem $(I_N,|H_N|)$ arising at $N$,
    the subtree $T^*_N$ rooted at $N$ in $T^*$ does \emph{not}
    have minimum cost, $\optstar{I_N,|H_N|}$, for that S-subproblem.
\end{lemma}


  \begin{figure}[t]\centering
    \noindent%
  \input{TREE_Spuler_counter_example_full.tex}%

    \caption{
      Spuler's algorithm fails on the S-subproblem $(I_{15}, 2)$.
      The algorithm computes a tree of cost 116, such as \sfigtree F a above,
      but there are trees, such as \sfigtree F b, of cost 115.
      The two trees' left subtrees are \sfigtree 4 a and \sfigtree 4 c.
    }\label{fig: spuler counter F}
  \end{figure}
  
  
%
Throughout the proof, unless otherwise specified, the terms \emph{holes}, \emph{queries}, and \emph{separated},
are all with respect to $T^*$. We use the following properties of $T^*$:
  \begin{description}
	\setlength{\itemsep}{0.15em}
	  
  \item[(P1)]  \emph{$T^*$ costs at most 115.}
    Indeed, one way to solve $(I_{15}, 2)$ is to take the two weight-7 keys as holes,
    then use tree $\sfigtree F b$ in Fig.~\ref{fig: spuler counter F}, of cost 115.
    As $T^*$ is optimal, it costs at most 115. 

  \item[(P2)] \emph{The root of $T^*$ does a less-than comparison.}
    Indeed, by~\cite[Theorem~5]{Anderson2002}, since $T^*$ is optimal for its queries,
    if $T^*$ does an equality-test at the root, then the total query weight in $T^*$
    is at most four times the maximum query weight.
    But the total query weight in $T^*$ is at least $7\cdot 5 = 35$,
    while the maximum query weight is at most 7. 
    
  \item[(P3)] \emph{In $T^*$ there are seven positive-weight queries,
      and the set of weight-5 queries is separated (by weight-0 queries).}
    To show this, we show that no weight-0 key is a hole.
	Suppose otherwise for contradiction. Let $k'$ be a weight-0 hole.  
	We can assume without loss of generality that  $k'$ is not used
        in any node of $T^*$ as an inequality key, for otherwise we can
	modify $T^*$ to not use it, without changing its cost, by replacing it with the
	weight-5 key $k'' = k'+1$ (which could be a hole or a query).
	Since $k'$ is a $T^*$-hole, by definition, $k'$ also cannot be used as an equality key.
	So we can assume that $k'$ does not appear as a comparison key in $T^*$.
 	Let $k\in\{k'\pm 1\}$ be a weight-5 query in $T^*$.
    (Query $k$ exists in $T^*$---otherwise $\{k'-1, k', k'+1\}$ would all be holes.)
    Replace $k$ throughout $T^*$ by $k'$.
    As $k'$ and $k$ are adjacent keys and $k'$ does not occur in $T^*$,
    the resulting tree $\barT$ still solves $(I_{15}, 2)$,
    and $\barT$ costs less than $T^*$ (as $\barT$ uses the weight-0 key $k'$ instead of the weight-5 key $k$).
    This contradicts the optimality of $T^*$.     

  \end{description}

  By (P3), $T^*$ has seven positive-weight queries.
 Using condition (P2) and left-right symmetry of subproblem $(I_{15}, 2)$, we can assume
	  that the left subtree of $T^*$ has at least four of the seven.
	(Note that ``flipping'' the tree, namely replacing each key $k$ by $16-k$ and swapping the 
	yes and no-subtrees, would map each inequality
	comparison $\compnode{<}{k}$ to $\compnode{\le}{16-k}$, while our model uses only strict inequalities.
	However, this latter comparison is equivalent to $\compnode{<}{17-k}$.)
  Let $T'$ be the left subtree. Denote the S-subproblem that $T'$ solves by $(I_\ell, h')$.
  To prove the lemma, assume for contradiction that $T'$ is optimal for its S-subproblem,
  and proceed by cases:
  
  \smallskip
  
  \begin{outercases}\setlength{\itemsep}{0.75em}
  \item[1]\emph{$T'$ has four positive-weight queries.}
    That is, $T'$ solves an S-subproblem $(I_\ell, h')$ where $\nofposvalues{\ell}-h' = 4$.
    By Lemma~\ref{lemma: T 4}, $T'$ has cost 49 and weight 22.
    The right subtree $T''$ of $T^*$ has the three remaining positive-weight queries,
    the leftmost two of which are separated in $T''$ by a zero-weight query (using (P3)).
    By Lemma~\ref{lemma: depth}\,(ii),
    $T''$ has cost at least $5\cdot e_3 = 30$ and weight at least 15.
    The cost of $T^*$ is its weight plus the costs of $T'$ and $T''$.
    By the above observations, this is at least $(22+15)+49+30 = 116$, contradicting (P1).

  \item[2]\emph{$T'$ has five positive-weight queries.}
    That is, $T'$ solves an S-subproblem $(I_\ell, h')$ where $\nofposvalues{\ell}-h' = 5$.
    By Lemma~\ref{lemma: T 5}, $T'$ has cost 69 and weight 27.
    The right subtree $T''$ of $T^*$ has the two other positive-weight queries,
    which have total depth at least $1+1=2$ in $T''$, and each has weight at least 5.
    So $T''$ has cost, and weight, at least $5\cdot 2 = 10$.
    The cost of $T^*$ is its weight plus the costs of $T'$ and $T''$.
    By the above observations, this is at least $(27+10) + 69 + 10 = 116$, contradicting (P1).

  \item[3]\emph{$T'$ has six or seven positive-weight queries.}
    Let set $S$ consist of just the first six of these queries.
    Since $T'$ is the left subtree of $T^*$ (which has seven positive-weight queries)
    $S$ does not contain the last key, 15.
    So (using (P3)) all queries in $S$, except possibly $\{1, 2\}$, are separated by weight-zero queries in $T'$.
    By Lemma~\ref{lemma: depth}\,(ii), $T'$ has cost at least $5\cdot e_6 = 90$.
    The cost of $T^*$ is its weight (at least $7\cdot 5 = 35$),
    plus the cost of its left and right subtrees (at least 90, counting $T'$ alone).
    So $T^*$ costs at least $35+90=125$, contradicting (P1).
  \end{outercases}

  This proves the lemma and Theorem~\ref{thm: no optimal substructure spuler}.
  \hfill\qed
  \vspace*{-1.1em}

  ~
\end{proof}


Finally we prove Theorem~\ref{thm: spuler flaw}.

\begin{proof}[Theorem~\ref{thm: spuler flaw}]
  \begin{linenomath*}
    Consider any execution of Spuler's algorithm on the S-subproblem
    $(I, h)$  from Theorem~\ref{thm: no optimal substructure spuler}, breaking ties arbitrarily.
    Let $T$ be the tree it computes for that S-subproblem.
    By Theorem~\ref{thm: no optimal substructure spuler},  either $T$ is not optimal for $(I,h)$,
    or some subtree $T'$ of $T$ is not optimal for its S-subproblem $(I', h')$.
    So Spuler's algorithm must compute a non-optimal solution to at least one S-subproblem.
    \hfill\qed
    \vspace*{-1.1em}

  ~
\end{linenomath*}
\end{proof}

In fact, for this instance $(I, h)$,
Spuler's algorithm (as implemented via the Python code in Appendix~\ref{sec: code spuler})
computes a non-optimal tree of cost 116,
such as \sfigtree F a in Fig.~\ref{fig: spuler counter F}.
By inspection, tree \sfigtree F b in that figure costs 115, so \sfigtree F a is not optimal.

 
\paragraph{Discussion.}
As mentioned earlier, this counter-example is just for a subproblem.
This subproblem has $h=2$ holes, so it does not
represent a complete instance of \twoWCST
for which Spuler's algorithm would give an incorrect \emph{final} result.
However, this counter-example does demonstrate that Spuler's algorithm
solves some \emph{subproblems} incorrectly,
so that the recurrence relation underlying its dynamic program is incorrect.
At a minimum, this suggests that any proof of correctness for Spuler's algorithm
would require a more delicate approach. Anderson~{\etal}~\cite{Anderson2002}
establish some conditions on the weights of equality-test keys in optimal trees.
  It may be possible to leverage the bounds from~\cite{Anderson2002} to show that
  bad subproblems---those that are not solved correctly by the algorithm---never appear as subproblems of an optimal complete tree.
  For example, per Anderson {\etals} Theorem 5 
  for any equality-test node in any optimal tree, the weight of the node's key
  must be at least one quarter of the total weight of the keys that reach the node.
  Hence, if a subproblem $(I', h')$ is solved by some subtree $T'$
  of an optimal tree $T^*$, then each hole key in $T'$
  must have weight at least one third of the total weight of the queries in $T'$.
  This implies that the subproblem $(I_{15}, 2)$ in the proof of Theorem~\ref{thm: spuler flaw}
  cannot actually occur in any optimal tree for $(I_{15}, 0)$.

While the question of correctness of Spuler's algorithm is somewhat
intriguing, it should be noted that showing its correctness will 
not improve known complexity bounds for \twoWCST, as
there are faster \twoWCST algorithms that are known to be 
correct~\cite{Anderson2002,chrobak_etal_isaac_2015}.


\subsection{Proof of Lemma~\ref{lemma: depth}.}
Here is the promised proof of Lemma~\ref{lemma: depth}.

\begin{proof}[Lemma~\ref{lemma: depth}]
    Recall that $T$ is a tree for some S-subproblem 
    and $Q$ is a subset of the queries in $T$, with $m = |Q|$.
    \smallskip
    
    \noindent\emph{Part (i).} Assume that $Q$ is separated. 
    Our goal is to show that the total depth in $T$ of queries in $Q$ is at least $d_m$,
    as defined before Lemma~\ref{lemma: depth}.
    It is convenient to recast the problem as follows.
    Change the weight of each query in $Q$ to 1. Change the weight of each query not in $Q$ to 0.
	We will refer to the resulting cost of a tree as
	\emph{modified cost}.
    Now we need to show that the modified cost of $T$ is at least $d_m$.
    The proof is by induction on $m$.

The base cases (when $m = 1,2$) are easily verified, so consider the inductive step,
for some given $m \ge 3$. We assume that $T$ and $Q$ are chosen to
minimize the modified cost of $T$, subject to $|Q|=m$.  
Call this the \emph{minimality assumption}.

Suppose $T$ does an inequality test at the root.
Let $T_1$ and $T_2$ be the left and right subtrees of $T$,
and for $a\in\braced{1,2}$ let $Q_a\subseteq Q$ contain the queries in $Q$ that fall in $T_a$.
Let $i = |Q_1|$, so that $|Q_2| = m-i$.
For $a\in\braced{1,2}$, query set $Q_a$ is $T_a$-separated.
By the minimality assumption, $0\not\in\{i, m-i\}$.
The modified cost of $T$ is its weight ($m$),
plus the modified costs of $T_1$ and $T_2$.
By the inductive assumption, this is at least $m+d_{i} + d_{m-i} \ge d_m$, as desired.

Suppose $T$ does an equality test at the root. The minimality assumption implies that
the equality-test key has non-zero (modified) weight.
 (This follows via the argument given for Property (P2) in the proof of Lemma~\ref{lemma: obstacle},
 using Anderson \etals Theorem~5 or Corollary~3.)
So the equality-test key is in $Q$.  Let $T_1$ be the no-subtree of $T$ and
 let $Q_1\subseteq Q$ contain the queries in $Q$ that fall in $T_1$;
so we have $|Q_1|=m-1$. Set $Q_1$ is $T_1$-separated, so by the inductive assumption,
$T_1$ has modified cost at least $d_{m-1}$.
So the modified cost of $T$ is at least $m+ d_{m-1} = m + d_1 + d_{m-1} \ge d_m$, as desired.

\smallskip

\begin{linenomath*}
\noindent\emph{Part (ii).} The proof of Part (ii) follows the same inductive argument as above.
The base cases for $m=1,2$ are trivial. The verification of the base case
for $m=3$ is by straightforward case analysis. 
In the inductive step, the only significant difference is in the case
when $T$ does an inequality test at the root. Since $Q$ is now only nearly separated, 
$Q_1$ will be $T_1$-separated while $Q_2$ will be nearly $T_2$-separated
(or vice versa), giving us that the modified cost of 
$T$ is at least $m+d_{i} + e_{m-i} \ge e_m$.
  \hfill\qed
  \vspace*{-1.1em}

~
\end{linenomath*}
\end{proof}



\bigskip
\paragraph{Note:}
We would like to use this opportunity to acknowledge yet another
error in the literature on binary split trees, this one in our
own paper~\cite{chrobak_etal_isaac_2015}. In that paper we introduced
a perturbation method that can be used to extend algorithms for
binary search trees with keys of distinct weights to instances where 
key-weights need not be distinct, and we claimed that this method can
be used to speed up the computation of optimal binary split trees
to achieve running time $O(n^4)$. 
(Recall that in binary split trees from~\cite{Huang1984,Perl1984,Hester1986},
the equality-test key in each node must be a most likely key among keys reaching the node.)
As it turns out, this claim is not valid. In essence, the perturbation 
approach from~\cite{chrobak_etal_isaac_2015} does not apply to binary
split trees because such perturbations affect the choice of the equality-test key
and thus also the validity of some trees.
{See~\cite{chrobak2015optimal_erratum,chrobak_etal_simple_bcst_algorithm_2019}
  for an erratum, full proofs of the remaining results,
  and pointers to follow-up work.}


\bigskip

\paragraph{Acknowledgements.}
We are very grateful to the anonymous reviewer who meticulously
verified our proofs and calculations, and provided numerous
comments that significantly improved the rigor and clarity of the paper.




\bibliographystyle{plainurl} 
\bibliography{bib}

\newpage
\appendix
\section*{APPENDIX}

\section{--- Python code for Huang and Wong's \gbsplit algorithm}\label{sec: code huang}



  \lstinputlisting{CODE_huang1984_enumeration.tex}



\newpage
\section{--- Python code for Spuler's \twoWCST algorithm}\label{sec: code spuler}


  \lstinputlisting{CODE_spuler1994_enumeration.tex}




\end{document}